\crefname{figure}{Fig.}{Fig.}
\newtheorem{lemma}{Lemma}
\theoremstyle{definition}
\newtheorem{definition}{Definition}
\theoremstyle{theorem}
\theoremstyle{definition}
\newtheorem{example}{Example}[section]
\newenvironment{sloppypar*}{\sloppy\ignorespaces}{\par}
\newcommand{\bma} {\begin{pmatrix}}
\newcommand{\ema} {\end{pmatrix}}
\newcommand{\qubitn}{\mathsf{Q}}
\DeclareMathAlphabet{\mathdutchcal}{U}{dutchcal}{m}{n}
\newcommand{\be}{\begin{equation}}
\newcommand{\ee}{\end{equation}}
\newcommand{\bp}{\begin{pmatrix}}
\newcommand{\ep}{\end{pmatrix}}
\newcommand{\ben}{\begin{enumerate}}
\newcommand{\een}{\end{enumerate}}
\newcommand{\eps}{\epsilon}
\newcommand{\Lcutoff}{\Lambda^\text{min}} 
\newcommand{\Ucutoff}{\Lambda^\text{max}} 
\begin{document}

\title{Quantum Simulation of Second-Quantized Hamiltonians in Compact Encoding}

\author{William M. Kirby}
\email{william.kirby@tufts.edu}
\affiliation{Department of Physics and Astronomy, Tufts University, Medford, MA 02155, USA
}

\author{Sultana Hadi}
\affiliation{Department of Physics and Astronomy, Tufts University, Medford, MA 02155, USA
}

\author{Michael Kreshchuk}
\affiliation{Department of Physics and Astronomy, Tufts University, Medford, MA 02155, USA
}
\affiliation{Physics Division, Lawrence Berkeley National Laboratory, Berkeley, CA 94720, USA}

\author{Peter J. Love}
\affiliation{Department of Physics and Astronomy, Tufts University, Medford, MA 02155, USA
}

\begin{abstract}
We describe methods for simulating general second-quantized Hamiltonians using the compact encoding, in which qubit states encode only the occupied modes in physical occupation number basis states.
These methods apply to second-quantized Hamiltonians composed of a constant number of interactions, i.e., linear combinations of ladder operator monomials of fixed form.
Compact encoding leads to qubit requirements that are optimal up to logarithmic factors.
We show how to use sparse Hamiltonian simulation methods for second-quantized Hamiltonians in compact encoding, give explicit implementations for the required oracles, and analyze the methods.
We also describe several example applications including the free boson and fermion theories, the $\phi^4$-theory, and the massive Yukawa model, all in both equal-time and light-front quantization.
Our methods provide a general-purpose tool for simulating second-quantized Hamiltonians, with optimal or near-optimal scaling with error and model parameters.
\end{abstract}

\pacs{Valid PACS appear here}
\maketitle

\tableofcontents

\section{Introduction}
\label{intro}

We describe a framework for simulating second-quantized Hamiltonians on quantum computers.
Hamiltonians in second-quantization are ubiquitous in quantum chemistry, many-body physics, and quantum field theory, all of which are target applications for quantum simulation.
Fermionic Hamiltonians with fixed particle number admit simple encodings in the Pauli basis~\cite{jordan28a,bravyi02a,seeley12a}, and these have been the focus of many quantum simulation experiments to date~\cite{DuH2NMR,lanyon2010towards,peruzzo14a,wang2015quantum,omalley16a,santagati18a,shen2017quantum,paesani2017,kandala17a,hempel18a,dumitrescu18a,colless18a,nam20a,kokail19a,kandala19a,google20a}.
However, second-quantized Hamiltonians are \emph{sparse}~--- they have only polynomially-many nonzero entries per row or column~--- as long as they have polynomially-many terms.
This makes them appropriate for simulation using methods developed for sparse Hamiltonians~\cite{aharonov03a,childs03a,berry07a,childs10a,berry12a,berry14a,berry15a,berry15b,low17a,low19a,berry20a}.

The second-quantized Hamiltonians we consider are given as polynomials in ladder operators acting on occupation number states (\emph{Fock states}).
The main idea is to extend the \emph{compact encoding} previously studied in~\cite{aspuru2005simulated,toloui2013cimatrix,kreshchuk20a}, which only stores information about occupied modes in a given Fock state.
The application of sparse simulation techniques to electronic structure Hamiltonians was previously studied in~\cite{babbush16a,babbush17a}, and these papers use a special case of the compact encoding (which they call ``compressed representation").
In \cref{prior_work}, we will compare the overall cost of simulation using our algorithm to those of~\cite{babbush16a,babbush17a}.

The compact encoding is to be contrasted with \emph{direct encodings}, which store information about all physical modes, whether they are occupied or not.
The Jordan-Wigner and Bravyi-Kitaev encodings commonly used in quantum algorithms for quantum chemistry are examples of direct encodings~\cite{jordan28a,bravyi02a,seeley12a}.
Compact encodings are suitable for Hamiltonians that are sparse in the occupation number basis.
In a sparse Hamiltonian, the number of nonzero elements in each row or column scales polynomially with the problem size, and therefore polylogarithmically with Hamiltonian dimension.
The compact encoding permits efficient sparsity-based state preparation and time-evolution methods~\cite{aspuru2005simulated,toloui2013cimatrix,kreshchuk20a}.

The methods we develop in this paper are motivated by simulation of quantum field theory.
In particular, we will focus on the case of Hamiltonians expressed in the plane wave momentum basis as our main example, since it illustrates the key techniques of our method.
We use the fact that such Hamiltonians can be expressed as sums of \emph{interactions}, where an interaction is a sum of ladder operator monomials that only differ in their momentum quantum numbers.
The sum within each interaction runs over all assignments of momenta that conserve the total momentum (see \cref{hamiltonian} for details).

In Sections  \ref{main_results_sec}-\ref{applications} we choose to define multi-particle states using the plane wave momentum basis, as is typically done in quantum field theory, because this example is sufficiently complex to capture the main considerations.
However, our method extends straightforwardly to Hamiltonians where the sums within interactions run over quantum numbers other than plane wave momenta, as long as the number of distinct interactions in the Hamiltonian is polynomial in the system parameters (such as momentum cutoffs).
These cases include a wide range of theories in quantum chemistry, condensed matter physics, and quantum field theory, including basis light-front quantization~\cite{varybasis,kreshchuk2020light,kreshchuk2020blfq}.
How to extend our methods beyond the plane wave momentum basis is explained in \cref{beyond_momentum}.

\section{Main results for plane wave momentum basis}
\label{main_results_sec}

Algorithms for simulating general sparse Hamiltonians access the Hamiltonian via oracle unitaries that are queried (applied) to provide the locations and values of the nonzero Hamiltonian matrix elements~\cite{aharonov03a,childs03a,berry07a,childs10a,berry12a,berry14a,berry15a,berry15b,low17a,low19a,berry20a}.
If we want to apply such algorithms to a second-quantized Hamiltonian in compact encoding, then we have to provide two main additional components.
First, we need to explicitly construct the oracle unitaries for the specific Hamiltonian of interest, as sequences of primitive gates.
We will show how to do this in \cref{enumerator,matrix_element}, decomposing the oracle unitaries into qubit operations that are log-local in the problem parameters, which for us will be momentum cutoffs, since we focus on the example of the plane wave momentum basis.
The log-local operations can then themselves be decomposed into primitive gates from any desired gate set with only polynomial overhead.

Second, the general sparse Hamiltonian methods assume that the oracles act directly upon row and column indices (encoded in qubit states) of the Hamiltonian~\cite{aharonov03a,childs03a,berry07a,childs10a,berry12a,berry14a,berry15a,berry15b,low17a,low19a,berry20a}.
We instead want methods that act directly upon compact-encoded Fock states, because the physical meaning of such states can be directly read out, which ultimately permits efficient implementation of the oracle unitaries as well as of observables.
However, unlike simply labeling the rows and columns of the Hamiltonian by sequential binary numbers, the set of bitstrings corresponding to compact-encoded Fock states is not simple to characterize or enumerate.
These bitstrings label computational basis states that span the subspace of qubit Hilbert space that the Hamiltonian acts on.
Therefore, we need to show that when we implement oracle unitaries that act directly on compact-encoded Fock states, the high-level simulation algorithms~\cite{aharonov03a,childs03a,berry07a,childs10a,berry12a,berry14a,berry15a,berry15b,low17a,low19a,berry20a} that use the oracles as their building blocks will still work.
This is explained in \cref{sparsity}.

The overall asymptotic costs of our methods in both qubit and gate counts are summarized in \cref{tab:costs}, for the example of the plane wave momentum basis.
The details of the costs are as follows.
The number of qubits required to encode a Fock state in compact encoding is derived in \cref{encoding}, resulting in the expression in \eqref{qubitn}, which is asymptotically
\begin{equation}
\label{qubitn_main}
    \qubitn=O\bigg(I\log W+I\sum_{j=1}^d\log\left(\Ucutoff_j-\Lcutoff_j\right)\bigg),
\end{equation}
where $I$ is the maximum possible number of occupied modes in a Fock state, $W$ is the maximum possible occupation of any mode, $d$ is the number of spatial dimensions, and $\Lcutoff_j$ and $\Ucutoff_j$ are lower and upper momentum cutoffs in each dimension $j$.
Hence fixing $|\Ucutoff_j|,|\Lcutoff_j|\le\Lambda$ for some overall cutoff $\Lambda$ results in the scaling given in \cref{tab:costs}.
The expression in \eqref{qubitn_main} assumes that the number of qubits required to encode the non-momentum quantum numbers is constant.

In our implementations the cost in log-local gates of the enumerator oracle (the oracle that gives the locations of nonzero matrix elements, defined in \eqref{enumerator_oracle_def}) asymptotically dominates the cost of the matrix element oracle (defined in \eqref{matrix_element_oracle_def}).
These costs are derived in \cref{enumerator} and \cref{matrix_element}, respectively, and result in the expressions \eqref{cost_scaling} and \eqref{matrix_element_cost_scaling}.
The dominant cost is the former, which is
\begin{equation}
\label{oracle_cost}
    O\Big(I^{h}+\Lambda^{dg}\Big)
\end{equation}
exactly as in \cref{tab:costs}, where $h$ is the maximum number of annihilation operators in any interaction in the Hamiltonian and $g$ is the maximum number of creation operators in any interaction in the Hamiltonian.

Finally, if our Hamiltonian is time-independent, then by using qubitization~\cite{low19a} the total number of oracle queries required to simulate time-evolution is
\begin{equation}
    O\left(\tau+\frac{\log(1/\eps)}{\log\log(1/\eps)}\right),
\end{equation}
where $\tau=k\|H\|_\text{max}t$, $k$ is the sparsity of the Hamiltonian $H$, $t$ is the total evolution time, and $\eps$ is the error.
Multiplying by the oracle cost \eqref{oracle_cost} gives the overall asymptotic scaling of the number of log-local gates:
\begin{equation}
\label{total_cost}
    O\Bigg[\bigg(\tau+\frac{\log(1/\eps)}{\log\log(1/\eps)}\bigg)\bigg(I^{h}+\Lambda^{dg}\bigg)\Bigg].
\end{equation}

If instead our Hamiltonian is time-dependent, then by using the method of~\cite{berry20a} the total number of oracle queries required to simulate time-evolution is
\begin{equation}
    O\left(\tau\frac{\log(\tau/\eps)}{\log\log(\tau/\eps)}\right),
\end{equation}
where now $\tau\equiv k\int_{0}^t\|H\|_\text{max}dt$ (without loss of generality taking the starting time to be $t=0$).
Hence the overall log-local gate count for our algorithm is
\begin{equation}
\label{total_cost_time_dep}
    O\Bigg[\bigg(\tau\frac{\log(\tau/\eps)}{\log\log(\tau/\eps)}\bigg)\bigg(I^{h}+\Lambda^{dg}\bigg)\Bigg].
\end{equation}
Suppressing the logarithmic components in either \eqref{total_cost} or \eqref{total_cost_time_dep} gives the expression in \cref{tab:costs}.

\begin{table}[]
    \centering
    \textbf{Parameters:}\\\vspace{0.05in}
    \begin{tabular}{c|c}
        number of spatial dimensions & $d$ \\\hline
        max number of occupied modes & $I$ \\\hline
        max occupancy of a single mode & $W$ \\\hline
        momentum cutoff & $\Lambda$ \\\hline
        max incoming lines in any interaction & $h$ \\\hline
        max outgoing lines in any interaction & $g$ \\\hline
        sparsity & $k$ \\\hline
        max-norm of Hamiltonian & $\|H\|_\text{max}$ \\\hline
        simulation time  & $t$
    \end{tabular}
    \caption{Glossary of parameter definitions.}
    \label{tab:glossary}
\end{table}

\begin{table}[]
    \centering
    \textbf{Costs:}\\\vspace{0.05in}
    \begin{tabular}{c|c}
        qubits to encode Fock state & $O\Big(I\log(W\Lambda^d)\Big)$ \\\hline
        log-local operations for oracle & $O\Big(I^{h}+\Lambda^{dg}\Big)$\\\hline
        total log-local operations & $\widetilde{O}\Big(k\|H\|_\text{max}t\big(I^{h}+\Lambda^{dg}\big)\Big)$
    \end{tabular}
    \caption{Summary of qubit and gate count costs for our algorithm, for a second-quantized Hamiltonian $H$ in the plane wave momentum basis. The parameters are defined in \cref{tab:glossary}.}
    \label{tab:costs}
\end{table}

\subsection{Comparison to direct encoding}
\label{comparison_direct}

Recall that the goal of the compact encoding is to minimize the number of qubits required to simulate a second-quantized Hamiltonian.
Direct encodings, which explicitly store information about every mode including the unoccupied modes, will require more qubits but afford simpler operations, as discussed in the introduction.

In direct encoding we store the occupation of every mode in a Fock state.
Hence each mode can be assigned to a specific register of qubits, so it is not necessary to store the information identifying the mode in the qubit state.
Therefore, the number of qubits required for a single mode is just $O(\log W)$, where as above $W$ is the single-mode occupation cutoff.
This is multiplied by the number of modes to give the total number of qubits required for the direct encoding:
\begin{equation}
    O\left((\log W)\prod_{j=1}^d\left(\Ucutoff_j-\Lcutoff_j\right)\right),
\end{equation}
where the number of modes is $O\left(\prod_{j=1}^d\left(\Ucutoff_j-\Lcutoff_j\right)\right)$ assuming the numbers of species and non-momentum quantum numbers are constant.
In other words, compared to the number \eqref{qubitn_main} of qubits for the compact encoding, the direct encoding has linear rather than logarithmic scaling with the number of modes.
Hence, when the maximum number $I$ of distinct occupied modes is much smaller than the total number of modes, the compact encoding will be asymptotically advantageous in number of qubits.

The costs of oracle implementations for the direct encoding were evaluated in Section I.B of the Supplemental Material to~\cite{kirby2020sparsevqe}.
As with the compact encoding, the cost of the enumerator oracle dominates, coming out to
\begin{equation}
    O\left((h+g)\log W\right)
\end{equation}
Toffoli gates (using our notation).
As expected, in typical cases this will be smaller than the cost \eqref{oracle_cost} of the enumerator oracle in compact encoding, since it could only be of the same order if the maximum occupation $W$ of a single mode is exponentially larger than $I$, the number of distinct occupied modes, and $\Lambda$, the momentum cutoff.

These comparisons confirm the expected relation between direct and compact encodings: they form a space-time tradeoff, with the direct encoding using more space to obtain shorter circuits, and the compact encoding saving space at the expense of longer circuits.
Note, however, that both are efficient in the sense that their costs in both space and time are at worst polynomial in the problem parameters.
The differences are in which scalings are logarithmic (or constant) versus polynomial.

\subsection{Comparison to prior work on electronic-structure Hamiltonians}
\label{prior_work}

Previous work has demonstrated how to implement sparsity-based simulation of the electronic-structure problem in second-quantization~\cite{babbush16a} and the configuration-interaction (CI) representation~\cite{toloui2013cimatrix,babbush17a}.
These result in gate counts of $\widetilde{O}(N^5t)$ and $\widetilde{O}(\eta^2N^3t)$, respectively, where $N$ is the number of orbitals, $\eta$ is the number of electrons, and $t$ is the simulation time.
The dependence on error is suppressed in these expressions, but is polylogarithmic in the inverse error.

We can compare our method to~\cite{toloui2013cimatrix,babbush16a,babbush17a} by applying it to the electronic-structure problem.
In this case, we can replace $I$ (the maximum possible number of occupied modes) by $\eta$.
The total number of modes in our method is $O(\Lambda^d)$, so we may replace $\Lambda^d$ by $N$ in our asymptotic expressions.
If we apply our method to the CI-matrix, Eq.~(20) in~\cite{babbush17a} gives the sparsity as $O(\eta^2N^2)$, and the discussion following Eq.~(73) in~\cite{babbush17a} shows that $\|H\|_\text{max}$ is polylogarithmic in $N$.
Finally, $g$ and $h$ are both two for the electronic-structure problem.
Making all of these replacements in \eqref{total_cost} and suppressing polylogarithmic factors gives
\begin{equation}
    \widetilde{O}\bigg(\eta^2N^2t(\eta^2+N^2)\bigg)=\widetilde{O}\bigg(\eta^2N^4t\bigg).
\end{equation}
This is better than the scaling for the second-quantized algorithm of~\cite{babbush16a}, but worse by a factor of $N$ than the CI algorithm of~\cite{babbush17a}.
The extra factor of $N$ essentially comes from the fact that the algorithm of~\cite{babbush17a} uses the Slater rules directly, which our algorithm does not take into account.

This is illustrative of what we expect to be a general pattern: while our algorithm is applicable to a broad range of second-quantized Hamiltonians, if special structure is known about some particular Hamiltonian it may be possible to design algorithms that are specific to that Hamiltonian and outperform ours.
An interesting question for future work is to what extent it is possible to design general-purpose algorithms that are able to naturally take advantage of such problem-specific structure.

\section{Compact encoding and Hamiltonians}

In this section, we define the compact encoding, which maps Fock states to qubit states, and the input model for our second-quantized Hamiltonians.
After this section, we will often say ``Fock state" when we really mean ``compact-encoded Fock state," since the latter is cumbersome.
There will usually be no ambiguity in this, since qubit operators can only act upon compact-encoded Fock states, but whenever there is ambiguity we will explicitly state which we are talking about.

\subsection{Compact encoding of Fock states}
\label{encoding}

Throughout, when we refer to momenta we will mean dimensionless momenta, denoted by $\textbf{n}$.
These are related to the dimensionful momenta $\textbf{p}$ as
\begin{equation}
    p_j=\frac{2\pi}{L_j}n_j
\end{equation}
where $L_j$ is the box size for each component $j$ and we take $\hbar=1$.
We also impose cutoffs $\Lambda$ on the momenta, i.e., each component $n_j$ must satisfy
\begin{equation}
\label{cutoff_condition}
    \Lcutoff_j\le n_j\le\Ucutoff_j
\end{equation}
for some cutoffs
\begin{equation}
\label{cutoff_def}
    \Lambda = (~(\Lcutoff_1,\Ucutoff_1),~(\Lcutoff_2,\Ucutoff_2),...,~(\Lcutoff_d,\Ucutoff_d)~),
\end{equation}
where $d$ is the number of spatial dimensions.
In equal-time quantization, it is generally the case that each $\Lcutoff_i<0$ and each $\Ucutoff_i>0$, while in light-front quantization there is some particular axis $z$ such that $\Lcutoff_z>0$ (see \cref{lf_application} for details of light-front quantization).

A Fock state in compact encoding has the form:
\begin{equation}
\label{fock_state}
    |\mathcal{F}\rangle=|(q_1,\textbf{n}_1,w_1),(q_2,\textbf{n}_2,w_2),...,(q_J,\textbf{n}_J,w_J)\rangle,
\end{equation}
where each $w_i$ is the occupancy of the mode $q_i$ with momentum $\textbf{n}_i$~\cite{kreshchuk20a}.
$q_i$ is a collective label that specifies the particle up to its momentum; for example, $q_i$ might determine whether the particle is a boson or a fermion~\footnote{We leave consideration of exotic particle statistics to future work.}, what species of boson or fermion it is (if multiple are present in the theory), and whether it is a particle or an antiparticle, in addition to properties like spin, flavor, color, etc.
We store only occupied modes, so each occupancy $w_i\ge1$.

\begin{example}
Suppose we have a 1+1D theory containing bosons and fermions whose only quantum number is momentum.
We can let $q_i=0$ label bosons, and $q_i=1$ label fermions (and $q_i=2$ label antifermions, but for simplicity we will not include these in the examples below).
Then a few examples of compact-encoded Fock states are:
\begin{equation}
    |(0,2,1)\rangle,
\end{equation}
which encodes one boson with momentum $2$.
\begin{equation}
    |(0,2,2)\rangle
\end{equation}
encodes two bosons with momentum $2$.
\begin{equation}
    |(0,2,3),(1,5,1)\rangle
\end{equation}
encodes three bosons with momentum $2$ and one fermion with momentum $5$.
\begin{equation}
    |(0,2,3),(0,3,2),(1,5,1)\rangle
\end{equation}
encodes three bosons with momentum $2$, two bosons with momentum $3$, and one fermion with momentum $5$.
If our theory also contained spin, for example, then we would expand the $q_i$ labels to include this: e.g., $q_i=\{1,\uparrow\}$ means fermion with spin up, so
\begin{equation}
    \big|(\{1,\uparrow\},2,1),~(\{1,\downarrow\},2,1)\big\rangle
\end{equation}
encodes one fermion with momentum 2 and spin up, and one fermion with momentum 2 and spin down.
\end{example}

We compact-encode a Fock state \eqref{fock_state} in a qubit register of the form
\begin{equation}
\label{qubit_state_encoding}
    |X_1,X_2,...,X_I\rangle,
\end{equation}
where $I$ is the maximum possible number of occupied modes, and each $X_i$ is a \emph{mode register} capable of encoding a single mode $(q_i,\textbf{n}_i,w_i)$.
For a Fock state containing $J\le I$ occupied modes we use the first $J$ of the $X_i$ to encode the modes.
The encoded modes are ordered primarily by $q_i$, and secondarily by momentum.
Note that in equal-time quantization the actual number of occupied modes can be unbounded, so we would have to impose a cutoff $I$ by hand.
In light-front quantization $I$ is finite and determined by the harmonic resolution~\cite{kreshchuk20a}.
In chemistry, the particle number (i.e., the total occupation) is generally fixed, so $I$ is equal to the particle number.

Given some maximum number $I$ of occupied modes, either fixed by the theory or imposed by hand, we have to encode $I$ mode registers.
Each mode register must encode occupation of the mode (which we take to be upper bounded by some cutoff $W$), the non-momentum quantum numbers of the mode (which we assume to take a constant number of possible values, and hence to require some fixed number $N_q$ of qubits), and the momentum of the mode.
For $d$ spatial dimensions indexed by $j$, each component of momentum takes ${\Ucutoff_j-\Lcutoff_j+1}$ values, so the momentum of a mode can be encoded in $\sum_{j=1}^d\left\lceil\log_2\left(\Ucutoff_j-\Lcutoff_j+1\right)\right\rceil$ qubits.
Hence the total number of qubits required to encode a mode is
\begin{equation}
    N_q+\lceil\log_2W\rceil+\sum_{j=1}^d\left\lceil\log_2\left(\Ucutoff_j-\Lcutoff_j+1\right)\right\rceil,
\label{moden}
\end{equation}
so the total number of qubits to encode a Fock state that contains at most $I$ occupied modes is
\begin{equation}
\begin{split}
    \qubitn=I\bigg(&N_q+\lceil\log_2W\rceil\\
    &+\sum_{j=1}^d\left\lceil\log_2\left(\Ucutoff_j-\Lcutoff_j+1\right)\right\rceil\bigg).
\end{split}
\label{qubitn}
\end{equation}

If for some dimension $j$, $\Ucutoff_j$ and $\Lcutoff_j$ are both positive (or both negative), the number of occupied modes $I$ is bounded by some fraction of the total momentum in that dimension.
If this is only true for a single dimension (i.e., all other dimensions can have both positive and negative momentum), then this case corresponds to light-front quantization~\cite{kreshchuk20a}.
Without loss of generality, suppose $\Lcutoff_1>0$.
Let $K$ denote the total momentum in dimension 1.
In this case the maximum total number $I$ of occupied modes in any Fock state is
\begin{equation}
\label{monotonic_number_of_modes}
    I=\left\lfloor\frac{K}{\Lcutoff_1}\right\rfloor,
\end{equation}
since every particle must have momentum at least $\Lcutoff_1$ in dimension 1.
The Fock state satisfying \eqref{monotonic_number_of_modes} is one containing $\lfloor K/\Lcutoff_1\rfloor$ modes, all with momentum $\Lcutoff_1$ or $\Lcutoff_1+1$ in dimension 1 such that the total momentum in dimension 1 is $K$, but with distinct other quantum numbers~\cite{kreshchuk20a}.
In the special case where $\Lcutoff_1=1$, \eqref{monotonic_number_of_modes} simplifies to show that the maximum number of occupied modes is identical to the total momentum along axis 1.

\subsection{Special Case: Light-Front Quantization}
\label{lf_application}

Although the compact encoding is agnostic to the form of the theory to which it is applied, it turns out to be particularly advantageous for relativistic field theories in the light-front (LF) formulation~\cite{PhysRevD.32.1993,PhysRevD.36.1141,brodsky98a,kreshchuk20a,kreshchuk2020light}.
Here, we review light-front quantization and explain how compact encoding applies in this case.
We will later return to the light-front example to illustrate the methods.

We can think of LF quantization as taking the perspective of a massless observer moving at the speed of light in some direction, which we take to be the $-z$ direction.
Thus the dimensionless discretized momenta along this axis take strictly positive values ${n_z\in[1,K]}$, where $K$ is the total dimensionless LF momentum of the Fock state (also called the \textit{harmonic resolution}).
Importantly, this is also true for massless particles~\cite{brodsky98a}.
In other words,
\begin{equation}
\label{cutoff_lf}
    \Lcutoff_1=1,\quad\Ucutoff_1=K,
\end{equation}
where we take axis 1 to correspond to the $z$ direction.
Also, since it is the total LF momentum, $K$ automatically imposes a cutoff on the number of excitations in a mode (${W = K}$), as well as on the number of occupied modes in a Fock state (${I = \sqrt{2K}}$ for ${d=1}$ and ${I = K}$ for $d\ge2$)~\cite{kreshchuk20a}.
The momenta along axes transverse to the light-front direction have the same properties as in equal-time quantization.

The above points mean that for light-front quantization the general expression \eqref{qubitn} for qubit count in compact encoding specializes to
\begin{equation}
\label{qubitn_lf_1D}
    \qubitn_{\text{compact}}^{\text{LF,$d=1$}}=\sqrt{2K}\left(N_q+2\lceil\log_2K\rceil\right)
\end{equation}
for one spatial dimension, or
\begin{equation}
\begin{split}
    \qubitn_{\text{compact}}^{\text{LF,$d\geq2$}}&=K\bigg(N_q+2\lceil\log_2K\rceil\\
    &+\sum_{j=2}^d\left\lceil\log_2\left(\Ucutoff_j-\Lcutoff_j+1\right)\right\rceil\bigg)
\end{split}
\label{qubitn_lf_moreD}
\end{equation}
for $d>1$ spatial dimensions.

We can compare this to the number of qubits required for the direct encoding, as a special case of the general comparison between the two encodings given above.
The total number of modes is
\begin{equation}
\label{number_of_modes_direct}
    K\left(\prod_{j=2}^d\big(\Ucutoff_j-\Lcutoff_j+1\big)\right)q,
\end{equation}
where $q=O(2^{N_q})$ is the number of possible values of the intrinsic quantum numbers, the number of possible values of the light-front momentum is $K$, and the number of possible values of the transverse momenta is $\prod_{j=2}^d\big(\Ucutoff_j-\Lcutoff_j+1\big)$.
In a direct encoding we would encode the occupancy of each of these modes in $\lceil\log_2K\rceil$ qubits (since $K$ is an upper bound on the occupancy), so the total number of qubits for the direct encoding is
\begin{equation}
\label{number_of_qubits_direct}
    \qubitn_\text{direct}^{\text{LF}}=K\left(\prod_{j=2}^d\big(\Ucutoff_j-\Lcutoff_j+1\big)\right)q\,\lceil\log_2
    K\rceil.
\end{equation}

In other words, for $\Lambda^\perp$ an upper bound on the transverse momentum cutoffs, up to constant and logarithmic factors the number of qubits for the direct encoding is $\widetilde{\Theta}\left(K(\Lambda^\perp)^{d-1}\right)$, while the number of qubits for the compact encoding is $\widetilde{\Theta}\left(K\right)$.
This explains why LF quantization motivates development of compact encoding methods.
In \cref{applications}, we will analyze our oracle constructions for a number of field theories in both equal-time and light-front quantization.

\subsection{Hamiltonian}
\label{hamiltonian}

A normal-ordered, second-quantized Hamiltonian is composed of terms with the form
\begin{equation}
\label{2_quant_term}
    \beta a^\dagger_ia^\dagger_j\cdots a^\dagger_ka_la_m\cdots a_n,
\end{equation}
where $\beta$ is a coefficient, $a^\dagger$ and $a$ are fermionic or bosonic creation and annihilation operators, and $i,j,...,k,l,m,...,n$ are labels for the particles being created and annihilated.
In the remainder of this paper, we will write creation and annihilation operators as
\begin{equation}
    a^{(\dagger)}_{q_i}(\textbf{n}_i),
\end{equation}
where $\textbf{n}_i$ is the momentum of the created or annihilated particle and $q_i$ is a collective label for its remaining quantum numbers (including species), as in the previous section.

We may visualize a term like \eqref{2_quant_term} as a diagram with an incoming line for each annihilation operator and an outgoing line for each creation operator.
We define an \emph{interaction} to be a sum of such terms, with the momenta varying over all momentum-conserving combinations, but the other properties of the incoming and outgoing particles fixed.
Thus we can visualize an interaction as a diagram without momentum specifications.
An interaction whose diagram contains $f$ external lines is called an \emph{$f$-point interaction}.
In the remainder of the paper, when we refer to incoming or outgoing particles, we will mean the incoming or outgoing lines of the diagram of an interaction, which represent annihilation or creation operators, respectively.
Note that although a diagram of this kind resembles a Feynman diagram, it does not represent a matrix element calculation, but instead is just a visualization of a collection of ladder operator monomials.

\begin{example}
\label{interaction_example}
In a 1+1D theory consider the 3-point interaction
\begin{equation}
\label{example_interaction}
    \sum_{\textbf{n}_1,\textbf{n}_2,\textbf{n}_3}a_2^\dagger(\textbf{n}_3)a_1(\textbf{n}_2)a_1(\textbf{n}_1),
\end{equation}
where the sum varies over all momenta such that $\textbf{n}_1+\textbf{n}_2=\textbf{n}_3$.
This interaction describes annihilation of a pair of particles of type `1' to form a single particle of type `2', and is represented by the diagram shown in \cref{feynman_diag_example}.
So, for example, one possible instance of the interaction would map two particles of type `1', both with momentum 2 (i.e., $\textbf{n}_1=\textbf{n}_2=2$), to a particle of type `2' with momentum  (i.e., $\textbf{n}_3=4$).
We can represent these as Fock states in compact encoding as in \eqref{fock_state}:
\begin{equation}
    |(\text{`1'},2,2)\rangle~\rightarrow~|(\text{`2'},4,1)\rangle,
\end{equation}
where we recall that the first entry in each tuple encodes $q_i$ (in this case `1' or `2'), the second entry encodes the momentum, and the third entry encodes the occupation.
If instead the incoming momenta were $\textbf{n}_1=1$ and $\textbf{n}_2=3$, then the incoming and outgoing Fock states would instead be represented as
\begin{equation}
    |(\text{`1'},1,1),(\text{`1'},3,1)\rangle~\rightarrow~|(\text{`2'},4,1)\rangle.
\end{equation}
If another, non-interacting mode were present (say, two particles of type `2' with momentum 5), then we would have
\begin{equation}
    |(\text{`1'},1,1),(\text{`1'},3,1),(\text{`2'},5,2)\rangle\rightarrow|(\text{`2'},4,1),(\text{`2'},5,2)\rangle,
\end{equation}
where the additional mode $(\text{`2'},5,2)$ representing the non-interacting particles is present on both sides.
Note that since all momenta are positive in the above examples, they represent possible interaction instances in a 1+1D light-front field theory.
\end{example}

\begin{figure}
\centering
\includegraphics[width=2in]{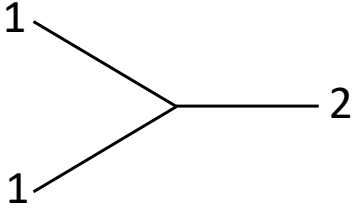}
\caption{Diagram of the example 3-point interaction \eqref{example_interaction}, which describes annihilation of a pair of particles of type `1' to form a single particle of type `2' (we read time from left to right).
\label{feynman_diag_example}}
\end{figure}

We can formally define an interaction as follows:
\begin{definition}
\label{interaction_def}
    An \emph{interaction} is specified by the set
    \begin{equation}
    \label{interaction_eq}
        \{(q_1,q_2,...,q_g),(q_{g+1},q_{g+2},...,q_f)\},
    \end{equation}
    together with a coefficient function $\beta$ that maps sets of momenta for the incoming and outgoing particles to coefficient values.
    The $q_1,q_2,...,q_g$ specify the outgoing particles, and $q_{g+1},q_{g+1},...,q_f$ specify the incoming particles, up to their momenta.
\end{definition}
For an interaction as in \cref{interaction_def}, the corresponding \emph{interaction Hamiltonian} is
\begin{equation}
\label{interaction_hamiltonian}
    H_I\equiv\sum_{\{\textbf{n}_i\}}\beta(\{\textbf{n}_i\})\left(\prod_{i=1}^ga^\dagger_{q_i}(\textbf{n}_i)\right)\left(\prod_{i=g+1}^fa_{q_i}(\textbf{n}_i)\right),
\end{equation}
where the sum runs over all sets $\{\textbf{n}_i\}$ that conserve total momentum, i.e., such that
\begin{equation}
\label{momentum_cons}
    \sum_{i=1}^g\textbf{n}_i=\sum_{i=g+1}^f\textbf{n}_i.
\end{equation}
Any second-quantized Hamiltonian may be expressed as a sum of interaction Hamiltonians $H_I$ of the form \eqref{interaction_hamiltonian}.

Notice that since the Hamiltonian must be Hermitian, for each interaction the Hamiltonian must also contain its Hermitian conjugate.
For example, a Hamiltonian containing the interaction in \cref{interaction_example} (Eq.~\eqref{example_interaction}) must also contain the interaction
\begin{equation}
\label{example_interaction_reverse}
    \sum_{\textbf{n}_1,\textbf{n}_2,\textbf{n}_3}a_1^\dagger(\textbf{n}_1)a_1^\dagger(\textbf{n}_2)a_2(\textbf{n}_3),
\end{equation}
where again the sum varies over all momenta such that $\textbf{n}_1+\textbf{n}_2=\textbf{n}_3$.

In this paper, we will assume that the Hamiltonian, and thus the interactions included in it, are fixed up to momentum cutoffs.
Each interaction sets particular values of $f$ and $g$, so $f$ and $g$ can be treated as constants.
We will focus on the scaling of our algorithms with momentum cutoffs, which specify the resolution at which we study the given Hamiltonian.

\section{Sparse Hamiltonians}
\label{sparsity}

A Hamiltonian written as a matrix in a particular basis is said to be \emph{sparse} if the number of nonzero elements in each row (or column) is polylogarithmic in the total Hilbert space dimension.
Similarly, the maximum number of nonzero elements in any row (or column) is called the \emph{sparsity} of the Hamiltonian.
In this section, we first review methods for simulating sparse Hamiltonians, then describe how we can use these methods to act on Fock states in the compact encoding as described in \cref{encoding}, and finally analyze the sparsity of interaction Hamiltonians of the form \eqref{interaction_hamiltonian}.

\subsection{Sparse Hamiltonian Simulation Review}
\label{sparse_review}

Aharonov and Ta-Shma presented the first quantum algorithm for simulating sparse Hamiltonians in 2003~\cite{aharonov03a}, while the same year Childs \emph{et al.}~demonstrated quantum advantage with respect to an oracle in a similar setting~\cite{childs03a}.
Subsequent works have extended and improved these results~\cite{berry07a,childs10a,berry12a,berry14a,berry15a,berry15b,low17a,low19a,berry20a}.
These methods are based on accessing the sparse Hamiltonians via oracle input models.

Early results in sparse Hamiltonian simulation were based on product formulas~\cite{aharonov03a,berry07a,berry14a,berry15a} or quantum walks~\cite{childs10a,berry12a}.
The product formula based methods ultimately achieved optimal dependence on the error $\eps$ of the simulation~\cite{berry14a,berry15a}, while the quantum walk based methods achieved optimal dependence on the sparsity and simulation time~\cite{berry12a} (these optimal scalings are discussed below).
Then, Berry \emph{et al.}~used a quantum walk structure with techniques borrowed from the product formula approaches to obtain near optimal dependence on all parameters~\cite{berry15b}, and a subsequent paper extended and improved these results for time-dependent Hamiltonians~\cite{berry20a}.
In specific cases, related methods that depend only on the interaction-picture or off-diagonal norms of the Hamiltonian may be advantageous, but may also require extra work to cast the Hamiltonian from the sparse oracle input model to the required forms~\cite{low18a,kalev2021offdiagonal,chen2021permutationexpansion}.
Finally, Low and Chuang developed a technique based on quantum signal processing called \emph{qubitization} that achieved fully optimal scaling with all parameters for the time-independent case~\cite{low17a,low19a}.

Recent works on sparse Hamiltonian simulation specify the Hamiltonian via a pair of oracles, which may be expressed as unitary operations.
The first oracle is typically defined in the quantum walk based approaches as follows:
\begin{equation}
\label{enumerator_oracle_standard}
    O_F|x,i\rangle=|x,y_i\rangle,
\end{equation}
where for a Hamiltonian $H$ with sparsity $k$, ${i\in[k]\equiv\{0,1,2,...,k-1\}}$, and $y_i$ is the index of the $i$th nonzero entry in row $x$ of $H$.
We refer to $O_F$ as the ``enumerator oracle."
The basic quantum walk step developed in~\cite{childs10a,berry12a} underlies the near-optimal algorithms of~\cite{berry15b,berry20a} as well as the optimal algorithm obtained by qubitization~\cite{low19a}, so these algorithms use the form of the enumerator oracle given in \eqref{enumerator_oracle_standard}.

The product formula based methods~\cite{aharonov03a,berry07a,berry14a,berry15a}, on the other hand, typically define the enumerator oracle as follows:
\begin{equation}
\label{enumerator_oracle_standard_alt}
    O'_F|x,i,0\rangle=|x,i,y_i\rangle,
\end{equation}
i.e., the index $i$ is saved rather than being uncomputed on the way to computing $y_i$.
This distinction between uncomputing or saving the index $i$ appears harmless, but we will see shortly that for the variant of the oracles we will require, some care is needed to properly employ $O_F$ in order to obtain the optimal scaling offered by qubitization~\cite{low19a} or the near-optimal algorithm for time-dependent Hamiltonians~\cite{berry20a}.

The second oracle (which is common to all of the methods) calculates matrix elements of the Hamiltonian given indices for entries in the Hamiltonian:
\begin{equation}
\label{matrix_element_oracle_standard}
    O_H|x,y,0\rangle=|x,y,H_{xy}\rangle.
\end{equation}
The 0 on the left-hand side above denotes a register containing the number of qubits necessary to store the value of the entry $H_{xy}$ in binary form with the desired precision.
Note that although $O_H$ is defined for arbitrary matrix elements, it is typically only applied to pairs of indices $(x,y_i)$ corresponding to nonzero matrix elements (i.e., generated by the enumerator oracle), and this will always be the case for us, which simplifies the construction (see \cref{matrix_element}, and the proof of \cref{construction_lemma} in the next subsection).
We refer to $O_H$ as the ``matrix element oracle."

To simulate evolution for time $t$ under a time-independent Hamiltonian $H$ with sparsity $k$, qubitization uses
\begin{equation}
\label{query_complexity}
    O\left(\tau+\frac{\log(1/\eps)}{\log\log(1/\eps)}\right)
\end{equation}
oracle queries~\cite[Corollary 15]{low19a}, where
\begin{equation}
    \tau\equiv k\|H\|_\text{max}t\quad\text{(time-independent $H$)}
\end{equation}
and $\|H\|_\text{max}$ (the max-norm of $H$) is defined to be the maximum magnitude of any entry in $H$.
This scaling is optimal in the error $\eps$~\cite[Theorem 2.2]{berry14a}, and in the simulation time $t$ and sparsity $k$~\cite{berry12a}.
The optimal scaling with the simulation time is set by the \emph{no fast-forwarding theorem}, which states that evolution under a general Hamiltonian for a time $t$ cannot be simulated using a number of operations that is sublinear in $t$~\cite[Theorem 3]{berry07a} (this can be violated for some special types of Hamiltonians~\cite{atia17a}).

To simulate evolution for a time $T$ under a time-dependent Hamiltonian $H(t)$, the method of~\cite{berry20a} requires
\begin{equation}
\label{query_complexity_time_dep}
    O\left(\tau\frac{\log(\tau/\eps)}{\log\log(\tau/\eps)}\right)
\end{equation}
oracle queries (note that \eqref{query_complexity_time_dep} is a product, whereas \eqref{query_complexity} is a sum), with $\tau$ now defined by
\begin{equation}
    \tau\equiv k\int_{0}^{T}\|H(t)\|_\text{max}\,dt\quad\text{(time-dependent $H$)},
\end{equation}
where without loss of generality we take the initial time to be $t=0$.
In other words, the dependence on $H(t)$ obtained in~\cite{berry20a} is given by the $L^1$-norm of $\|H(t)\|_\text{max}$ over the time interval $[0,T]$.
This satisfies the intuitive notion that the cost of simulating $H(t)$ should depend instantaneously only on the value of $H$ at the current time.
For comparison, previous works on the time-independent case can generalize to time-dependent Hamiltonians, but with query complexity that scales instead according to $\text{max}_{t\in[0,T]}\|H(t)\|_\text{max}$ (i.e., with the $L^\infty$-norm of $\|H(t)\|_\text{max}$ over the time interval $[0,T]$)~\cite{berry20a}.

The method of~\cite{berry20a} uses a rescaling of the Hamiltonian depending on its instantaneous max-norm, which is accessed via two additional oracles:
\begin{equation}
\label{extra_oracles_for_time_dep}
\begin{split}
    &O_\text{norm}|t,z\rangle=|t,z\oplus\|H(t)\|_\text{max}\rangle,\\
    &O_\text{var}|\sigma,z\rangle=|\sigma,z\oplus f^{-1}(\sigma)\rangle,
\end{split}
\end{equation}
where
\begin{equation}
    f(t)\equiv\int_0^T\Lambda(t)\,dt
\end{equation}
and $\Lambda(t)$ is any efficiently computable tight upper bound on $\|H(t)\|$ (see~\cite[Section 4]{berry20a}).
As noted in~\cite[Section 4.2]{berry20a}, $f^{-1}$ can be computed to precision $\delta$ using $O(\log(T/\delta)$ queries to $f$, so as long as $\|H\|_\text{max}(t)$ is efficiently computable for any $t$, we can efficiently implement the oracles \eqref{extra_oracles_for_time_dep}.

These best known methods for simulation of sparse Hamiltonians have in common the basic step that they use to access the Hamiltonian.
This step is implementation of an isometry typically labeled $T$, which was originally proposed in~\cite{childs10a} as a component of a quantum walk, and first used explicitly in a Hamiltonian simulation technique in~\cite{berry12a}.
In our notation, $T$ may be written:
\begin{equation}
\label{Tdef}
    T=\sum_{x=1}^{2^n}|x\rangle_a|\phi_x\rangle_{b,c}\langle x|_a,
\end{equation}
where
\begin{equation}
\label{phidef_old}
\begin{split}
    &|\phi_x\rangle_{b,c}\\
    &\equiv\sqrt{\frac{r}{\|H\|_1}}\sum_{y=1}^{2^n}\sqrt{H_{xy}^*}|y\rangle_b|0\rangle_c+\sqrt{1-\frac{r\sigma_x}{\|H\|_1}}|\zeta_x\rangle_b|1\rangle_c,
\end{split}
\end{equation}
and $a$ and $b$ label registers of the same number of qubits, and $c$ labels a single ancilla qubit (we will typically suppress these subscripts when doing so leads to no ambiguity).
Here $r\in(0,1]$ is a parameter,
\begin{equation}
    \sigma_x\equiv\sum_{y=1}^{2^n}|H_{xy}|,
\end{equation}
and $|\zeta_x\rangle$ is some linear combination of the $|y\rangle$.
A careful reader may note that $T$ as defined in \eqref{Tdef} is not unitary.
This is resolved by letting $\eqref{Tdef}$ only define the action on a state of the form $|x\rangle|0\rangle$, i.e.,
\begin{equation}
    |x\rangle_a|0\rangle_{b,c}~\xrightarrow{T}~|x\rangle_a|\phi_x\rangle_{b,c},
\end{equation}
and the action on states not of this form can be anything as long as the overall operation is unitary~\cite{berry12a}.

The final, single ancilla qubit in \eqref{phidef_old} (i.e., the qubit labeled $c$) is present in order to ensure that the last term (proportional to $|\zeta_x\rangle|1\rangle$) is orthogonal to any of the first set of terms for any $x$, which is required by Eq.~(25) in~\cite{childs10a}.
Note that the final terms need not be orthogonal to each other for distinct values of $x$, even though a superficial reading of~\cite{childs10a} might suggest otherwise.
In fact, it is the final term in $|x\rangle|\phi_x\rangle$ (rather than just $|\phi_x\rangle$) that corresponds to the state $|\hspace{-4.5pt}\perp_j\rangle$ in~\cite{childs10a}: this term must therefore take orthogonal values for distinct values of $x$ (see Eq.~(24) and the corresponding discussion in~\cite{childs10a}), but this is trivially satisfied, since the final term in $|x\rangle|\phi_x\rangle$ is proportional to $|x\rangle|\zeta_x\rangle|1\rangle$.

A complete description of the various ways that the isometry $T$ can be used to construct Hamiltonian simulation algorithms is beyond the scope of this paper, but it was originally introduced in order to construct the quantum walk operator
\begin{equation}
    U=iS(2TT^\dagger-1)
\end{equation}
in~\cite{childs10a}, where $S$ is the operator that swaps registers $a$ and $b$, and also swaps the ancilla qubit $c$ with an additional ancilla qubit initially in the $|0\rangle$ state.
Repeatedly applying the quantum walk step $U$ yields a discrete approximation of the Hamiltonian evolution, up to unitary equivalence~\cite{childs10a,berry12a}.


\subsection{Sparse Hamiltonian Simulation for Compact-Encoded Fock States}

In the sparse simulation methods above, the oracles act upon states that encode row and column indices of the Hamiltonian.
We instead want to use oracles that act upon Fock states (recall that throughout, when we say ``Fock states" we mean ``compact-encoded Fock states").
The best sparse Hamiltonian simulation methods access the Hamiltonian via the operator $T$ (defined in \eqref{Tdef}), as discussed in \cref{sparse_review}.
Therefore, we want to use our new oracles to implement a version of $T$ that acts on Fock states rather than on row indices.
This will allow the sparse simulation methods to be implemented directly on Fock states.

One complication is the fact that for a $k$-sparse Hamiltonian, there are at most $k$ nonzero entries in any row or column, but in general there can be fewer than $k$ in some rows and columns.
In such a row, some of the values of $i$, which runs from $0$ to $k-1$ and is supposed to index the nonzero entries in a given row, cannot actually index nonzero entries, because there aren't enough nonzero entries in the row.
There is no obvious natural mapping from these unused values of $i$ to matrix entries.
We will see that this situation arises very commonly for interaction Hamiltonians.
We will define oracles that act on Fock states in a way that resolves this issue, and show that we can use these to recover the desired building blocks for the sparse simulation algorithms described above.
In \cref{enumerator,matrix_element}, we will then explicitly construct implementations of these oracles.

Let $|\mathcal{F}\rangle$ be a Fock state, and let $H$ be a $k$-sparse Hamiltonian.
Then there are at most $k$ states $|\mathcal{F}'_i\rangle$ whose Hamiltonian matrix elements with $|\mathcal{F}\rangle$ are nonzero: assume that they are indexed by elements of some set $\mathcal{I}_\mathcal{F}\subseteq[k]$.
Let $O_F$ and $O_H$ now define oracles that act as follows:
\begin{align}
    &O_F|\mathcal{F}\rangle|0\rangle|i\rangle=|\mathcal{F}\rangle|\Psi'(\mathcal{F},i)\rangle|a(\mathcal{F},i)\rangle,\label{enumerator_oracle_def}\\
    &O_H|\mathcal{F}\rangle|\mathcal{F}'\rangle=|\mathcal{F}\rangle|\mathcal{F}'\rangle\big|\langle\mathcal{F}'|H|\mathcal{F}\rangle\big\rangle,\label{matrix_element_oracle_def}
\end{align}
where $i\in[k]$ and the functions $\Psi'(\mathcal{F},i)$ and $a(\mathcal{F},i)$ are defined by
\begin{equation}
    \label{bigpsidef}
    \Psi'(\mathcal{F},i)=
    \begin{cases}
        \mathcal{F}'_i~\text{if}~i\in\mathcal{I}_\mathcal{F},\\
        \mathcal{F}~\text{if}~i\notin\mathcal{I}_\mathcal{F},
    \end{cases}
\end{equation}
\begin{equation}
\label{afuncdef}
    a(\mathcal{F},i)=
    \begin{cases}
        0~\text{if}~i\in\mathcal{I}_\mathcal{F},\\
        i~\text{if}~i\notin\mathcal{I}_\mathcal{F}.
    \end{cases}
\end{equation}
Without loss of generality, let $\mathcal{F}'_0=\mathcal{F}$ whenever the matrix element of $\mathcal{F}$ with itself is nonzero.
Thus the enumerator oracle $O_F$ may be alternatively expressed as
\begin{equation}
\label{enumerator_oracle_def_alt}
    O_F|\mathcal{F}\rangle|0\rangle|i\rangle=
    \begin{cases}
        |\mathcal{F}\rangle|\mathcal{F}'_i\rangle|0\rangle~\text{if}~i\in\mathcal{I}_\mathcal{F},\\
        |\mathcal{F}\rangle|\mathcal{F}\rangle|i\rangle~\text{if}~i\notin\mathcal{I}_\mathcal{F}.
    \end{cases}
\end{equation}

Next, we define an analog of $|\phi_x\rangle$ \eqref{phidef_old} in terms of Fock states:
\begin{equation}
\begin{split}
    |\phi_\mathcal{F}\rangle\equiv&\sqrt{\frac{r}{\|H\|_1}}\sum_{|\mathcal{F}'\rangle}\sqrt{\langle\mathcal{F}'|H|\mathcal{F}\rangle}|\mathcal{F}'\rangle|a(\mathcal{F},\mathcal{F}')\rangle|0\rangle\\
    &+\sqrt{1-\frac{r\sigma_\mathcal{F}}{\|H\|_1}}|\zeta_\mathcal{F}\rangle|1\rangle,
\label{phidef}
\end{split}
\end{equation}
where $a(\mathcal{F},\mathcal{F}')$ is some binary number that is zero when $\langle\mathcal{F}'|H|\mathcal{F}\rangle\neq0$, $\sigma_\mathcal{F}$ is defined analogously to $\sigma_x$, i.e.,
\begin{equation}
    \sigma_\mathcal{F}\equiv\sum_{|\mathcal{F}''\rangle}|\langle\mathcal{F}|H|\mathcal{F}''\rangle|,
\end{equation}
and $|\zeta_\mathcal{F}\rangle$ is a linear combination of states of the form
\begin{equation}
\label{zeta_F_def}
    |\mathcal{F}''\rangle|b\rangle,
\end{equation}
where $|b\rangle$ is some binary number encoded in the same register as $a(\mathcal{F},\mathcal{F}')$.
All of these components will be determined precisely by the algorithm for constructing $|\phi_\mathcal{F}\rangle$, below.
The fact that $a(\mathcal{F},\mathcal{F}')=0$ whenever $\langle\mathcal{F}'|H|\mathcal{F}\rangle\neq0$ ensures that for
\begin{equation}
    Q(\mathcal{F})\equiv\{|\mathcal{F}'\rangle~|~\langle\mathcal{F}'|H|\mathcal{F}\rangle\neq0\},
\end{equation}
we may rewrite \eqref{phidef} as
\begin{equation}
\begin{split}
    |\phi_\mathcal{F}\rangle=&\sqrt{\frac{r}{\|H\|_1}}\sum_{|\mathcal{F}'\rangle
    \in Q(\mathcal{F})}\sqrt{\langle\mathcal{F}'|H|\mathcal{F}\rangle}|\mathcal{F}'\rangle|0\rangle|0\rangle\\
    &+\sqrt{1-\frac{r\sigma_\mathcal{F}}{\|H\|_1}}|\zeta_\mathcal{F}\rangle|1\rangle.
\label{phidef_alt}
\end{split}
\end{equation}
Using $|\phi_\mathcal{F}\rangle$, we define a version of $T$ for Fock states:
\begin{equation}
\label{newTdef}
    T\equiv\sum_{|\mathcal{F}\rangle}|\mathcal{F}\rangle|\phi_\mathcal{F}\rangle\langle\mathcal{F}|.
\end{equation}
In \cref{construction_lemma}, below, we show how $T$ can be constructed using $O(1)$ queries to the oracles $O_F$ and $O_H$ as defined by \eqref{enumerator_oracle_def} and \eqref{matrix_element_oracle_def}.
First, however, we will show that $T$ as defined by \eqref{newTdef} may replace the original version of $T$ in \eqref{Tdef}, and the resulting operator acts on Fock states but otherwise reproduces all of the properties of the basic step used in~\cite{berry12a}.

The terms in the first line of \eqref{phidef_alt} are exact analogs of the corresponding terms in \eqref{phidef_old}, the definition of $|\phi_x\rangle$ used to construct the standard operator $T$ as used in~\cite{berry12a,berry15b,low19a,berry20a}.
The only difference is the inclusion of an additional ancilla register in state $|0\rangle$ instead of just the single ancilla qubit in state $|0\rangle$.

The second line in \eqref{phidef_alt} is also analogous to the corresponding term in \eqref{phidef_old}, but here $|\zeta_\mathcal{F}\rangle$ includes the ancilla register $|b\rangle$ (see \eqref{zeta_F_def}) in addition to the register encoding Fock states, whereas $|\zeta_x\rangle$ in \eqref{phidef_old} is a linear combination of row indices only.
However, in order to satisfy the orthogonality conditions discussed in the final paragraph of \cref{sparse_review}, we only require that for any $\mathcal{F}$, $\mathcal{F}'$, and $\mathcal{F}''$, $|\zeta_\mathcal{F}\rangle|1\rangle$ is orthogonal to $|\mathcal{F}''\rangle|a(\mathcal{F}',\mathcal{F}'')\rangle|0\rangle$; this is satisfied because of the final single qubit.
These are the only conditions that $T$ must satisfy~\cite{berry12a}.

\begin{lemma}[Construction of $T$]
\label{construction_lemma}
    Given an input state $|\mathcal{F}\rangle$, the operator $T$ as defined in \eqref{newTdef} can be implemented using $O(1)$ queries to the oracles $O_F$ and $O_H$ as defined in \eqref{enumerator_oracle_def} and \eqref{matrix_element_oracle_def}.
    (This Lemma closely follows Lemma 4 in~\cite{berry12a}.)
\end{lemma}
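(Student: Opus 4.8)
The plan is to follow the construction of Lemma~4 in \cite{berry12a}, adapted to act on compact-encoded Fock states and to accommodate the fact that some rows of $H$ carry fewer than $k$ nonzero entries. I would use five registers: the input register holding $|\mathcal{F}\rangle$ (never modified); an output register initialized to $|0\rangle$ that will hold $\mathcal{F}'$; the index register on which $O_F$ acts (holding $i$, then $a(\mathcal{F},i)$); the single ancilla qubit appearing as the last qubit in \eqref{phidef}; and a scratch register for matrix elements that starts and ends in $|0\rangle$. The opening moves are standard: (i)~prepare $\frac{1}{\sqrt{k}}\sum_{i=0}^{k-1}|i\rangle$ in the index register, an oracle-free subroutine realizable exactly as in \cite{berry12a}; (ii)~apply $O_F$ once, sending each $|i\rangle$ to $|\mathcal{F}'_i\rangle|0\rangle$ on the good branch $i\in\mathcal{I}_\mathcal{F}$ and to $|\mathcal{F}\rangle|i\rangle$ on the bad branch $i\notin\mathcal{I}_\mathcal{F}$; and (iii)~apply $O_H$ once into the scratch register, writing $\langle\Psi'(\mathcal{F},i)|H|\mathcal{F}\rangle$, which is nonzero on the good branch and equals the (possibly nonzero) diagonal element $\langle\mathcal{F}|H|\mathcal{F}\rangle$ on the bad branch.

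Next I would perform a controlled single-qubit rotation on the ancilla qubit: a magnitude rotation composed with a phase gate, with angle and phase read from the scratch register, so that — together with the $1/\sqrt{k}$ weighting from the uniform superposition — the good-branch terms acquire the amplitude $\sqrt{r/\|H\|_1}\,\sqrt{\langle\mathcal{F}'|H|\mathcal{F}\rangle}$ of \eqref{phidef_alt}, for a fixed branch of the square root and a suitable value of the parameter $r$, exactly as in \cite{berry12a}. The one place where care is needed — the only genuine departure from the vanilla construction — is that this rotation must additionally be conditioned on the index register holding $0$: otherwise a bad branch carrying a nonzero diagonal element would leak amplitude into the ancilla-$|0\rangle$ sector and corrupt the first line of \eqref{phidef_alt}. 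I would then flip the ancilla qubit to $|1\rangle$ conditioned on the branch being bad (detectable from the index register being nonzero, or the index register being $0$ together with the scratch register being $0$); this leaves the good-branch superposition untouched while routing every bad term into the ancilla-$|1\rangle$ sector. Finally I would uncompute the scratch register with a second call to $O_H$ (equivalently $O_H^\dagger$), disentangling it back to $|0\rangle$.

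Collecting terms, the good branch reassembles — using that the $\mathcal{F}'_i$ for $i\in\mathcal{I}_\mathcal{F}$ are distinct, so no two terms collide — into $\sqrt{r/\|H\|_1}\sum_{\mathcal{F}'\in Q(\mathcal{F})}\sqrt{\langle\mathcal{F}'|H|\mathcal{F}\rangle}\,|\mathcal{F}'\rangle|0\rangle|0\rangle$, precisely the first line of \eqref{phidef_alt}. The ancilla-$|1\rangle$ sector, which pools the residues of the good-branch rotations with all of the bad terms, has squared norm $1-r\sigma_\mathcal{F}/\|H\|_1$ by a short bookkeeping computation using $\sum_{\mathcal{F}'\in Q(\mathcal{F})}|\langle\mathcal{F}'|H|\mathcal{F}\rangle|=\sigma_\mathcal{F}$, and hence defines $\sqrt{1-r\sigma_\mathcal{F}/\|H\|_1}\,|\zeta_\mathcal{F}\rangle|1\rangle$. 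Thus the circuit sends $|\mathcal{F}\rangle|0\rangle\cdots|0\rangle$ to $|\mathcal{F}\rangle|\phi_\mathcal{F}\rangle$, implementing $T$ of \eqref{newTdef} on all inputs of the promised form using three oracle queries ($O_F$ and two of $O_H$) together with $O(1)$ oracle-free operations.

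To close, I would check the orthogonality requirements inherited from \cite{childs10a,berry12a} and recalled after \eqref{Tdef}: $|\zeta_\mathcal{F}\rangle|1\rangle$ is orthogonal to every $|\mathcal{F}''\rangle|a(\mathcal{F}',\mathcal{F}'')\rangle|0\rangle$ because of the final ancilla qubit, and the terminal vectors $|\mathcal{F}\rangle|\zeta_\mathcal{F}\rangle|1\rangle$ are mutually orthogonal for distinct $\mathcal{F}$ because each carries $|\mathcal{F}\rangle$ in the input register — both points already noted in the surrounding text. Since $|\phi_\mathcal{F}\rangle$ is a unit vector, the construction is an isometry on the promised subspace and extends to a unitary, which is all that \eqref{newTdef} demands. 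The main obstacle is not a new idea but the bookkeeping around the unused indices $i\notin\mathcal{I}_\mathcal{F}$: conditioning the rotation so they are not corrupted, shunting them into the orthogonal sector, and verifying that the surviving amplitudes reassemble exactly into \eqref{phidef}.
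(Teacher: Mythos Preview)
Your construction follows the paper's proof closely: uniform superposition over $i$, one call to $O_F$, a controlled single-qubit rotation on the ancilla, and two calls to $O_H$ to compute and uncompute the matrix element. The only substantive difference is in how the bad branches $i\notin\mathcal{I}_\mathcal{F}$ are routed into the ancilla-$|1\rangle$ sector. The paper controls the $O_H$ call (not the rotation) on $a(\mathcal{F},i)=0$, so that every bad branch carries scratch value $0$; the rotation \eqref{singlequbitrotation} is then applied \emph{unconditionally}, and since that rotation already sends $|0\rangle\mapsto|1\rangle$ when the recorded matrix element is zero, no explicit flip step is needed at all.

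Your variant---applying $O_H$ unconditionally, conditioning the rotation on the index register holding $0$, then explicitly flipping bad branches---also works in spirit, but the second disjunct of your flip condition introduces a small bug. In the edge case $i=0\notin\mathcal{I}_\mathcal{F}$ (which by the paper's convention forces $\langle\mathcal{F}|H|\mathcal{F}\rangle=0$), the index register holds $0$, so your rotation \emph{does} fire and, with scratch $=0$, already maps the ancilla to $|1\rangle$; your subsequent flip on ``index $=0$ and scratch $=0$'' would then send it back to $|0\rangle$. Dropping that disjunct---or, more simply, adopting the paper's scheme of controlling $O_H$ rather than the rotation and letting the rotation handle all bad branches automatically---fixes it.
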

\begin{proof}
The operator $T$ maps a Fock state $|\mathcal{F}\rangle$ to $|\mathcal{F}\rangle|\phi_\mathcal{F}\rangle$, for $|\phi_\mathcal{F}\rangle$ as defined by \eqref{phidef_alt}.
Explicitly including ancillas, we assume an input state of the form
\begin{equation}
    |\mathcal{F}\rangle|0\rangle|0\rangle|0\rangle.
\end{equation}
We map this to $|\mathcal{F}\rangle|\phi_\mathcal{F}\rangle$ as follows:
\begin{enumerate}
    
    \item Prepare a uniform superposition of the indices $i=1,2,...,k$ as:
    \begin{equation}
    \label{step1}
        \frac{1}{\sqrt{k}}\sum_{i=1}^k|\mathcal{F}\rangle|0\rangle|i\rangle|0\rangle.
    \end{equation}
    
    \item Apply $O_F$ as defined in \eqref{enumerator_oracle_def} to the first three registers, obtaining
    \begin{equation}
    \label{step2}
        \frac{1}{\sqrt{k}}\sum_{i=1}^k|\mathcal{F}\rangle|\Psi'(\mathcal{F},i)\rangle|a(\mathcal{F},i)\rangle|0\rangle.
    \end{equation}
    
    \item Controlled on $a(\mathcal{F},i)=0$, apply $O_H$ as defined in \eqref{matrix_element_oracle_def} to the first two registers, to calculate $\langle\mathcal{F}'_i|H|\mathcal{F}\rangle$ in an ancilla register that is initially $|0\rangle$ (recall that $a(\mathcal{F},i)=0$ if and only if $\langle\mathcal{F}'_i|H|\mathcal{F}\rangle\neq0$).
    Then, controlled on the resulting value $\langle\mathcal{F}'_i|H|\mathcal{F}\rangle$, rotate the single ancilla qubit (the final register) as
    \begin{equation}
    \label{singlequbitrotation}
        |0\rangle~\mapsto~\sqrt{k\frac{r\langle \mathcal{F}'_i|H|\mathcal{F}\rangle}{\|H\|_1}}|0\rangle+\sqrt{1-k\frac{r|\langle \mathcal{F}'_i|H|\mathcal{F}\rangle|}{\|H\|_1}}|1\rangle.
    \end{equation}
    Finally, uncompute the ancilla register encoding $\langle\mathcal{F}'_i|H|\mathcal{F}\rangle$ using another controlled query to $O_H$.
    This step is identical to the corresponding step in the method described in the proof of Lemma 4 in~\cite{berry12a}.

\end{enumerate}

To obtain the full state after these steps are complete, we insert \eqref{singlequbitrotation} into \eqref{step2}, giving
\begin{equation}
\begin{split}
    &\sum_{i=1}^k\sqrt{\frac{r\langle \mathcal{F}'_i|H|\mathcal{F}\rangle}{\|H\|_1}}|\mathcal{F}\rangle|\Psi'(\mathcal{F},i)\rangle|a(\mathcal{F},i)\rangle|0\rangle\\
    &+\sum_{i=1}^k\sqrt{\frac{1}{k}-\frac{r|\langle \mathcal{F}'_i|H|\mathcal{F}\rangle|}{\|H\|_1}}|\mathcal{F}\rangle|\Psi'(\mathcal{F},i)\rangle|a(\mathcal{F},i)\rangle|1\rangle.
\end{split}
\end{equation}
Using the fact that when $\langle\mathcal{F}'_i|H|\mathcal{F}\rangle\neq0$, we have $\Psi'(\mathcal{F},i)=\mathcal{F}'_i$ and $a(\mathcal{F},i)=0$ (see \eqref{bigpsidef} and \eqref{afuncdef}), we can rewrite the above expression as
\begin{equation}
\label{step3}
\begin{split}
    |\mathcal{F}\rangle\Bigg(&\sqrt{\frac{r}{\|H\|_1}}\sum_{|\mathcal{F}'\rangle\in Q(\mathcal{F})}\sqrt{\langle \mathcal{F}'|H|\mathcal{F}\rangle}|\mathcal{F}'\rangle|0\rangle|0\rangle\\
    &+\sum_{i=1}^k\sqrt{\frac{1}{k}-\frac{r|\langle \mathcal{F}'_i|H|\mathcal{F}\rangle|}{\|H\|_1}}|\Psi'(\mathcal{F},i)\rangle|a(\mathcal{F},i)\rangle|1\rangle\Bigg),
\end{split}
\end{equation}
Thus as noted in~\cite{berry12a}, the expression in parentheses in \eqref{step3} is equal to $|\phi_\mathcal{F}\rangle$ as given by \eqref{phidef_alt}, for $|\zeta_\mathcal{F}\rangle$ given by 
\begin{equation}
\label{zeta_psi_def}
\begin{split}
    |\zeta_\mathcal{F}\rangle=&\frac{1}{\sqrt{1-\frac{r\sigma_\mathcal{F}}{\|H\|_1}}}\\
    &\times\sum_{i=1}^k\sqrt{\frac{1}{k}-\frac{r|\langle \mathcal{F}'_i|H|\mathcal{F}\rangle|}{\|H\|_1}}|\Psi'(\mathcal{F},i)\rangle|a(\mathcal{F},i)\rangle.
\end{split}
\end{equation}
Hence, we have implemented $T$ using one query to $O_F$ and two queries to $O_H$, as in~\cite{berry12a}.
\end{proof}

In \cref{enumerator,matrix_element} we describe how to implement $O_F$ and $O_H$, respectively, in the compact mapping, for general interactions as described in \cref{interaction_def}.
These implementations, together with the construction in \cref{construction_lemma}, give us access to the optimal sparse Hamiltonian simulation technique afforded by qubitization~\cite{low19a}, as well as the other nearly-optimal techniques~\cite{berry15b,berry20a}.

In this paper, we focus on the application to simulating time-evolution, which is the goal of all of the sparse simulation papers we have cited~\cite{aharonov03a,childs03a,berry07a,childs10a,berry12a,berry14a,berry15a,berry15b,low17a,low19a,berry20a}.
A recent paper by some of the authors of the present work~\cite{kirby2020sparsevqe} demonstrated how to approximate ground state energies of sparse Hamiltonians using an extension of the variational quantum eigensolver (VQE), a hybrid quantum-classical algorithm that requires shorter circuits and is more noise-resilient than quantum algorithms for simulating time-evolution~\cite{peruzzo14a}.
Fock state oracles of the forms given in \eqref{enumerator_oracle_def}, \eqref{matrix_element_oracle_def} apply in this setting, which would allow us to implement VQE for second-quantized Hamiltonians in compact encoding.
The measurement scheme is given in~\cite{kirby2020sparsevqe}, and for an ansatz we could for example implement a version of Unitary Coupled Cluster~\cite{romero2018ucc} via oracle-based time evolutions generated by the Hamiltonian terms.

Because of the complexity of implementing the oracles, which we will present below in \cref{enumerator,matrix_element}, the circuit depths required for such an algorithm will be substantially longer than those used in VQE implementations appropriate for existing quantum computers.
Hence, they will require at least heavily error-mitigated or hardware-improved devices, and possibly fault-tolerance.
However, sparse VQE will still become possible before simulation of time-evolution, because it requires only a constant number of oracle queries (at most six~\cite{kirby2020sparsevqe}) per variational circuit, whereas simulation of time-evolution requires numbers of oracle queries that scale with the problem parameters, as in \eqref{query_complexity} or \eqref{query_complexity_time_dep} for example.

\subsection{Sparsity of general interactions in the Fock basis}

For the simulation methods described above to be efficient, we require the interactions to be sparse.
Consider an interaction specified as in \eqref{interaction_eq}, in $d$ dimensions with cutoffs as in \eqref{cutoff_def}.
The corresponding interaction Hamiltonian is given by \eqref{interaction_hamiltonian}.

To obtain an upper bound on the sparsity we may assume that the $f-g$ incoming particles in the interaction may be taken from any of the modes in the input state (recall that $f$ is the total number of external lines in the interaction, and $g$ is the number of outgoing lines).
In general, this upper bound is not tight, because it requires all modes in the input state to be the same up to momentum and to match all incoming lines in the interaction, but it will suffice to show that the sparsity is polynomial in the momentum cutoffs.
For $I$ the maximum number of modes in the input state (as in \cref{encoding}), this upper bound is
\begin{equation}
\label{incoming_assignments_bound}
    \Bigg(\hspace{-0.05in}\Bigg(\begin{matrix}I\\f-g\end{matrix}\Bigg)\hspace{-0.05in}\Bigg)
    =
    O\left(I^{f-g}\right),
\end{equation}
where the left-hand side denotes $I$ choose $f-g$ with replacement.
As discussed in \cref{hamiltonian}, the intrinsic quantum numbers of the outgoing particles are fixed by the interaction, but their momenta can take any values consistent with total momentum conservation.
For simplicity, we may upper bound this by counting all outgoing momentum assignments consistent with the cutoffs, of which there are
\begin{equation}
\label{outgoing_assignments_bound}
    O\left(\prod_{j=1}^d\big(\Ucutoff_j-\Lcutoff_j\big)^{g-1}\right),
\end{equation}
since we assign a momentum to each of the $g$ outgoing particles, but one of these is fixed by momentum conservation.

Hence, an upper bound on the total number of connected states to any given initial state under the interaction, and thus on the sparsity, is the product of \eqref{incoming_assignments_bound} and \eqref{outgoing_assignments_bound}:
\begin{equation}
\label{sparsity_general_cutoffs}
    k\le O\left(I^{f-g}\prod_{j=1}^d\big(\Ucutoff_j-\Lcutoff_j\big)^{g-1}\right)\le O\left(\Lambda_\text{max}^{d(f-1)}\right).
\end{equation}
The second upper bound is obtained by replacing $I$ with the total number of modes as a function of $\Lambda_\text{max}$, the maximum momentum cutoff.
This illustrates that even in this worst case, the sparsity is polynomial in the momentum cutoffs, for fixed dimension $d$ and number $f$ of external lines in the interaction.
The number of qubits $\qubitn$ is linear in $I$ up to logarithmic factors, so assuming $I$ is polynomial in the momentum cutoffs, $\qubitn$ is also polynomial in the momentum cutoffs, and thus the sparsity is polynomial in $\qubitn$.

\section{Enumerator oracle}
\label{enumerator}

In this section, we describe how to efficiently implement the oracle $O_F$, whose action is given by \eqref{enumerator_oracle_def}. 
In the first subsection, we describe some examples that illustrate all of the main techniques required for the general method.
In the second subsection, we describe the general method.
This description refers extensively to the details explained in the examples in the first subsection, so we strongly encourage the reader to begin with these.
Finally, in the third subsection we analyze the general method.

\subsection{Examples}
\label{enumerator_examples}

For an input state $|\mathcal{F}\rangle|i\rangle$, where $|\mathcal{F}\rangle$ is some (compact-encoded) Fock state and $i\in[k]$ for sparsity $k$, the enumerator oracle's action should be as follows: in an ancilla register, compute $|\mathcal{F}'_i\rangle$, the $i$th Fock state whose matrix element with $|\mathcal{F}\rangle$ is nonzero, and also uncompute $|i\rangle$.
Note that this action is assuming that $i$ does in fact index a nonzero matrix element (see \cref{sparsity}); the later examples will illustrate the possible situations where this may not be the case.

\begin{example}
\label{number_operator_example}
Consider a boson field on which our interaction is the number operator,
\begin{equation}
    H_I=\sum_\textbf{n}a^\dagger(\textbf{n})a(\textbf{n}).
\end{equation}
In the notation of \cref{interaction_def}, if we let `0' denote `boson', we would express this interaction as $\{(\text{`0'}),(\text{`0'})\}$ (meaning one boson in, one boson out), and the coefficient function is just $\beta(\textbf{n})=1$.
The interaction Hamiltonian $H_I$ maps any Fock state to itself, rescaled by a coefficient given by the total number of particles.
Hence, this interaction is diagonal and one-sparse (it maps each Fock state to at most one other Fock state).
Therefore, for any input $|\mathcal{F}\rangle|i=0\rangle$ (since $i$ takes only one value for a one-sparse interaction), the output of the enumerator oracle is just $|\mathcal{F}\rangle$, i.e.,
\begin{equation}
    O_F|\mathcal{F}\rangle|0\rangle=|\mathcal{F}\rangle|\mathcal{F}\rangle.
\end{equation}
Comparing this expression to \eqref{enumerator_oracle_def}, the reader will notice that we are suppressing the output $|a(\mathcal{F},i)\rangle$, but for the current example $a(\mathcal{F},i)$ is always $0$.
\end{example}

\begin{example}
\label{three_point_interaction_bosons_example}
Let us still consider a boson field, but now with the three-point interaction
\begin{equation}
    H_I=\sum_{\textbf{n}_1,\textbf{n}_2}a^\dagger(\textbf{n}_1+\textbf{n}_2)a(\textbf{n}_2)a(\textbf{n}_1)
\end{equation}
i.e., two incoming bosons annihilate to form a single outgoing boson whose momentum is the sum of the incoming momenta.
In the notation of \cref{interaction_def}, we would express this interaction as $\{(\text{`0'}),(\text{`0'},\text{`0'})\}$ (meaning two bosons in, one boson out), and the coefficient function is still just $\beta(\textbf{n}_1,\textbf{n}_2)=1$.
Even though this example appears only slightly more complicated than the number operator in \cref{number_operator_example}, it in fact will introduce almost all of the considerations we will require for completely general interactions.
We break the implementation of $O_F$ up into steps.

\textbf{\emph{Step 1 (identify incoming modes).}}
Given the input Fock state $|\mathcal{F}\rangle$, we need to use the input index $|i\rangle$ to determine the output Fock state $|\mathcal{F}'_i\rangle$.
The way we do this is to use $i$ to choose the two modes in $|\mathcal{F}\rangle$ that will contribute the incoming bosons in the interaction (they could come from the same mode).
For $I$ the maximum possible number of occupied modes (as in \cref{encoding}), we let $i$ index all pairs $J_i$ of mode indices: i.e.,
\begin{equation}
    \big\{J_i~|~i=1,2,...,I^2\big\}=\big\{\{j_1,j_2\}~|~j_1,j_2=1,2,...,I\big\}.
\end{equation}
So the first step in implementing $O_F$ is to compute $J_i$ in an ancilla register, as a function of $i$.

\textbf{\emph{Step 2 (remove incoming bosons).}}
Next, we find the momenta of the two modes indexed by $J_i$, and make sure that they have sufficient occupation to provide the incoming particles.
We can combine this with decrementing the occupation of those modes, as the first step towards constructing $|\mathcal{F}'_i\rangle$.
So, prior to beginning this step, we copy $|\mathcal{F}\rangle$ to a second qubit register that will become the output register containing $|\mathcal{F}'_i\rangle$ at the end of the implementation.
Note that this copying of $|\mathcal{F}\rangle$ is allowed because it is just copying in the (compact-encoded) Fock basis, which can be implemented via qubitwise CNOTs from the input $|\mathcal{F}\rangle$ to the new copy register assuming this is initially in the all-zeroes state.

On the copy of $|\mathcal{F}\rangle$, we first find the $j_1$th mode (where $J_i=(j_1,j_2)$), and check whether it has nonzero occupation.
If it does not, then we need some way to record the fact that $i$ does not index a valid nonzero matrix element.
The way we do this is to maintain an ancilla register called the ``flag" register, whose value is initially zero and should remain zero at the end of the implementation of $O_F$ if and only if $i$ indexes a valid nonzero matrix element.
So, to check whether the $j_1$th mode has nonzero occupation, we add one to the flag register controlled on $j_1$th mode having zero occupation.
Now we may proceed as if the mode does have nonzero occupation, knowing that if the flag register is nonzero at the end of the implementation we will simply reverse the whole procedure.

Next we decrement the occupation of the $j_1$th mode by one, and record its momentum in an ancilla register that encodes $\textbf{n}_1$, the momentum of the first incoming boson.
If the occupation of the mode is now zero, we should also set the remaining qubits encoding the mode to all zeroes (which we can do by applying CNOTs controlled on the corresponding qubits in the original version of $|\mathcal{F}\rangle$~--- recall that we are currently operating on the copy).
Also controlled on the occupation of the current mode being zero, we store its index in the first entry of an ancilla register $E$, which we will use later.

We now repeat this whole procedure (including checking the occupation) for the $j_2$th mode, recording its momentum in another ancilla register that encodes $\textbf{n}_2$.
If this mode is left empty, we store its index in the second entry of the ancilla register $E$.
Note that if both modes are the same, i.e., $j_1=j_2$, then at this second step we will add one to the flag register if the initial occupation of that mode was not at least two, since one boson has already been removed from the mode.

\textbf{\emph{Step 3 (reorder modes).}}
Once we are done with steps 1 and 2, we have $\textbf{n}_1$ and $\textbf{n}_2$ recorded in ancilla registers, and we have decremented the corresponding mode occupations in the copy of $|\mathcal{F}\rangle$.
However, it is possible that up to two modes in the copy of $|\mathcal{F}\rangle$ may have been left empty after their occupations were decremented.
Since the compact encoding only stores occupied modes, and encodes them in the first $J$ mode registers $X_i$ (see \eqref{qubit_state_encoding}), we need to move any empty modes to the end of the encoding.

To do this, we use the ancilla register $E$ that we introduced above.
This contains two entries $E_1$ and $E_2$; $E_1=j_1$ if the $j_1$th mode was left empty after removing the incoming particles to the interaction, and otherwise $E_1$ remains in its initial state (which should be chosen to be different from any of the values encoding indices).
$E_2$ was similarly set according to the $j_2$th mode.

Using these, we reorder the modes as follows.
Iterate over $j=1,2,...,I-1$ (the mode indices).
For each $j$, swap the $j$th register and the $(j+1)$th register controlled on $E_1\le j$ (and on $E_1$ actually encoding a valid mode index).
If the $j_1$th mode is emptied, the first time the above control condition will be satisfied is when $j=E_1=j_1$, so this will swap the emptied $j_1$th mode register with the $(j_1+1)$th mode register.
Next we move to $j=j_1+1\ge E_1$, so the $(j_1+1)$th mode register (which now contains the empty mode) gets swapped with the $(j_1+2)$th mode register, and so forth until the empty mode has been moved to the last mode register.

We now repeat this procedure for $E_2$ in order to move the $j_2$th mode to the end of the mode registers if it was emptied.
The only caveat with this step is that, if both the $j_1$th and $j_2$th modes were emptied and $j_1<j_2$, then after the first swapping sequence (for $E_1=j_1$), the empty mode that was initially in position $j_2$ is now in position $j_2-1$, since the $j_1$th mode was swapped out from before it.
Therefore, before repeating the procedure for $E_2$, we should subtract one from $E_2$ controlled on $E_1<j_2$.

\textbf{\emph{Step 4 (insert outgoing boson).}}
After we have completed all of the above steps, the copy of $|\mathcal{F}\rangle$ has had the two incoming bosons to the interaction removed, and the modes have been reordered if necessary so that the $J$ occupied modes are encoded in the first $J$ mode registers.
The momenta $\textbf{n}_1,\textbf{n}_2$ of the incoming bosons are also recorded in ancilla registers.
All that remains is to insert the new boson with momentum $\textbf{n}_1+\textbf{n}_2$.

To do this, we first iterate over the already occupied modes, checking whether each one has momentum $\textbf{n}_1+\textbf{n}_2$ and incrementing its occupation if so.
We also flip a single ancilla qubit from $|0\rangle$ to $|1\rangle$ if we find such a mode, to record the fact that we have inserted the new boson.
If at the end of the iterations, this ancilla qubit is still $|0\rangle$, then the new boson needs to be inserted as a new mode (so the following operations should be controlled on this).
In this case, we first find the location where the new mode should be inserted: to do this, we iterate over $j=1,2,...,I$, checking whether $\textbf{n}_1+\textbf{n}_2$ is greater than the momentum of the $(j-1)$th mode and less than the momentum of the $j$th mode.
This will only be true for a single value of $j$, which we can call $j'$, so we record $j'$ in an ancilla register.

Then we iterate over the mode indices in reverse order, i.e., $j=I-1,...,1$, for each $j$ swapping the $j$th and $(j+1)$th mode registers controlled on $j'\le j$.
Since the maximum possible number of occupied modes is $I$ and we are about to insert a new mode, prior to this sequence of swaps the $I$th mode register is guaranteed to be unencoded.
Hence, when $j=I-1$, we swap the $(I-1)$th and $I$th mode registers, moving the unencoded register to location $I-1$.
We then proceed to $j=I-2$, swapping the $(I-2)$th and $(I-1)$th mode registers and thus moving the unencoded register to location $I-2$, and so forth.
The last swap we perform is when $j=j'$, so once we are done with the full sequence of swaps the unencoded register is in location $j'$.
Therefore, we can simply set its momentum to $\textbf{n}_1+\textbf{n}_2$ and set its occupation to one, and we are done.

\textbf{\emph{Step 5 (uncompute ancillas).}}
Once all of the above operations are complete, the copy of $|\mathcal{F}\rangle$ has been transformed into the desired output Fock state $|\mathcal{F}'_i\rangle$, so all that remains is to uncompute the ancillas.
This could be done simply by copying $|\mathcal{F}'_i\rangle$ (in the Fock basis) and then exactly reversing all of the prior operations, but we also want to uncompute $|i\rangle$, the input index.
In order to accomplish this, we uncompute $|i\rangle$ using the register encoding $J_i$.
But now we cannot uncompute $J_i$ using $i$, since this value has been uncomputed, so we instead need to uncompute $J_i$ using the registers it was used to compute.
The details of this are tedious, but since $|\mathcal{F}\rangle$ and $|\mathcal{F}'_i\rangle$ together contain enough information to determine the values of all of the ancillas that were used to compute $|\mathcal{F}'_i\rangle$, we can use those values to uncompute the ancillas via similar operations to those used to compute $|\mathcal{F}'_i\rangle$.

Also, if we reach the end of the procedure and the flag register is nonzero, then we know that the index $i$ did not in fact correspond to a valid matrix element.
In this case, the desired output as given in \eqref{enumerator_oracle_def} is $|\mathcal{F}\rangle|\mathcal{F}\rangle|i\rangle$.
Hence, we want to keep $|i\rangle$ (which is $|a(\mathcal{F},i)\rangle$ in this case), completely reverse the rest of the computation, and then just copy $|\mathcal{F}\rangle$ itself to the output Fock state register.

This completes the implementation of $O_F$ for this example.

\end{example}

\begin{example}
\label{four_point_interaction_bosons_example}
Let us again consider a boson field, but now with the four-point interaction
\begin{equation}
    H_I=\sum_{\textbf{n}_1,\textbf{n}_2,\textbf{n}_3,\textbf{n}_4}a^\dagger(\textbf{n}_4)a^\dagger(\textbf{n}_3)a(\textbf{n}_2)a(\textbf{n}_1),
\end{equation}
where the sum runs over all momentum conserving combinations, i.e., $\textbf{n}_1+\textbf{n}_2=\textbf{n}_3+\textbf{n}_4$.
Many of the steps to implement the $O_F$ oracle for this interaction are the same as in \cref{three_point_interaction_bosons_example}, so instead of going through the entire procedure again, we will just describe what needs to change.

Steps 1 through 3, in which we identify the incoming modes, decrement their occupations, and reorder the modes if some of them are left empty, are the same as in \cref{three_point_interaction_bosons_example}.
However, the incoming momenta no longer uniquely determine the outgoing momenta, since for a given value of $\textbf{n}_1+\textbf{n}_2$ there are multiple values of $\textbf{n}_3$ and $\textbf{n}_4$ that satisfy momentum conservation.
Therefore, our input index $i$ needs to do more work than just to specify $J_i$ (the incoming mode indices).
In particular, after the total incoming momentum 
\begin{equation}
    \textbf{Q}\equiv\textbf{n}_1+\textbf{n}_2
\end{equation}
has been determined, we need to use $i$ to determine how this momentum should be split up between the outgoing bosons.

To do this, we use a classically-precomputed lookup table $A(\textbf{Q},i)$ that maps any possible value of the total momentum $\textbf{Q}$ together with an index value $i$ to a partition of the momentum among the outgoing particles.
For example, if we are in a 1+1D light-front field theory (see \cref{lf_application}) with momentum cutoffs $\Lcutoff=1$, $\Ucutoff=K=5$, then we could take $A(\textbf{Q},i)$ to be
\begin{equation}
\label{A_example}
\begin{split}
    A=
    \{&\\
    &(2,0)\mapsto\{1,1\},\\
    &(3,0)\mapsto\{2,1\},\\
    &(4,0)\mapsto\{3,1\},\\
    &(4,1)\mapsto\{2,2\},\\
    &(5,0)\mapsto\{4,1\},\\
    &(5,1)\mapsto\{3,2\}\\
    \}.
\end{split}
\end{equation}
In other words, for each fixed value of $\textbf{Q}$, $i$ indexes the possible partitions of $\textbf{Q}$ into two parts satisfying the momentum cutoffs, and $A(\textbf{Q},i)$ returns these values.
So, for example, if $\textbf{Q}=4$ and $i=1$, then 
\begin{equation}
    A(\textbf{Q},i)=A(4,1)=\{2,2\}.
\end{equation}

So, given the actual value of $\textbf{Q}$, which is stored in some ancilla register, and the value of $i$, which is one of the quantum inputs, we need to compute $A(\textbf{Q},i)$ in an ancilla register.
To do this, we classically iterate over the possible values $(\textbf{Q}',i')$, for each one setting the ancilla register to $A(\textbf{Q}',i')$ controlled on ${(\textbf{Q},i)=(\textbf{Q}',i')}$.
When this iteration is complete, we will have the outgoing momenta stored in this ancilla register.

Since we are now using $i$ both to specify $J_i$ and $A(\textbf{Q},i)$, we need to keep these independent.
To do this, if $A(\textbf{Q},i)$ requires at most $a$ distinct values of $i$, then we can let $A(\textbf{Q},i)$ be a function of ${i\text{ mod }a}$ and $J_i$ be a function of $\lfloor i/a\rfloor$.
For example, the instance of $A(\textbf{Q},i)$ given in \eqref{A_example} only requires two distinct values of $i$, so for this case we could let
\begin{equation}
    A(\textbf{Q},i)\mapsto A(\textbf{Q},i\text{ mod }2)
\end{equation}
and
\begin{equation}
    J_i\mapsto J_{\lfloor i/2\rfloor}.
\end{equation}

The only additional consideration is that, as illustrated in \eqref{A_example}, depending on the value of $\textbf{Q}$ not all of the values of $i\text{ mod }a$ may be used to specify outgoing momentum assignments via $A(\textbf{Q},i)$.
If $i\text{ mod }a$ takes one of these unused values, then this is just another instance of $i$ not indexing a valid matrix element, so we should add one to the flag register.
This would happen, for example, if we obtained the inputs $\textbf{Q}=2$, $i\text{ mod }2=1$ for $A(\textbf{Q},i\text{ mod }2)$ as given in \eqref{A_example}.

Once we have specified the two outgoing momenta $\textbf{n}_3$ and $\textbf{n}_4$, inserting them in the outgoing Fock state just requires applying step 4 in \cref{three_point_interaction_bosons_example} twice.
Step 5 is then also the same as for \cref{three_point_interaction_bosons_example}, and that completes the implementation of $O_F$ for this example.

\end{example}

\begin{example}
\label{bosons_fermions_four_point_example}
Let us now, finally, consider an interaction including fermions as well as bosons:
\begin{equation}
    H_I=\sum_{\textbf{n}_1,\textbf{n}_2,\textbf{n}_3,\textbf{n}_4}a_1^\dagger(\textbf{n}_4)a_0^\dagger(\textbf{n}_3)a_1(\textbf{n}_2)a_0(\textbf{n}_1),
\end{equation}
where subscript $0$ indicates boson and subscript $1$ indicates fermion, and the sum runs over all momentum conserving combinations, i.e., $\textbf{n}_1+\textbf{n}_2=\textbf{n}_3+\textbf{n}_4$.
Hence this interaction is an incoming boson and fermion, and an outgoing boson and fermion.

Many of the elements of the implementation of $O_F$ are the same as in the previous two examples.
One change is that the orders of the values of $J_i$ and $A(\textbf{Q},i)$ now matter, since the two incoming modes are now distinguishable, as are the two outgoing modes.
Also, we must now check that for $J_i=(j_1,j_2)$, the $j_1$th mode is bosonic and the $j_2$th mode is fermionic, adding one to the flag register if either is not.
The remainder of identifying and removing the incoming particles, reordering the modes, and computing the outgoing momenta are the same as in the prior examples.

When inserting the outgoing fermion we must also alter the procedure.
When we iterate over the modes to check whether any match the new fermion to be inserted, instead of incrementing its occupancy if we find a mode that matches (as we would for a boson), we add one to the flag register, because fermionic modes cannot have occupancy greater than one.
We then proceed with inserting the fermion as a new mode in exactly the same way as for bosons, and complete the rest of the procedure exactly as in \cref{three_point_interaction_bosons_example}.
Note that when computing the matrix element oracle $O_H$, we will have to additionally treat fermions and bosons differently because of their different commutation relations, but for the enumerator oracle this is not relevant.

\end{example}

\subsection{General method}

\begin{figure*}[htp]
\centering
\includegraphics[width=6.5in]{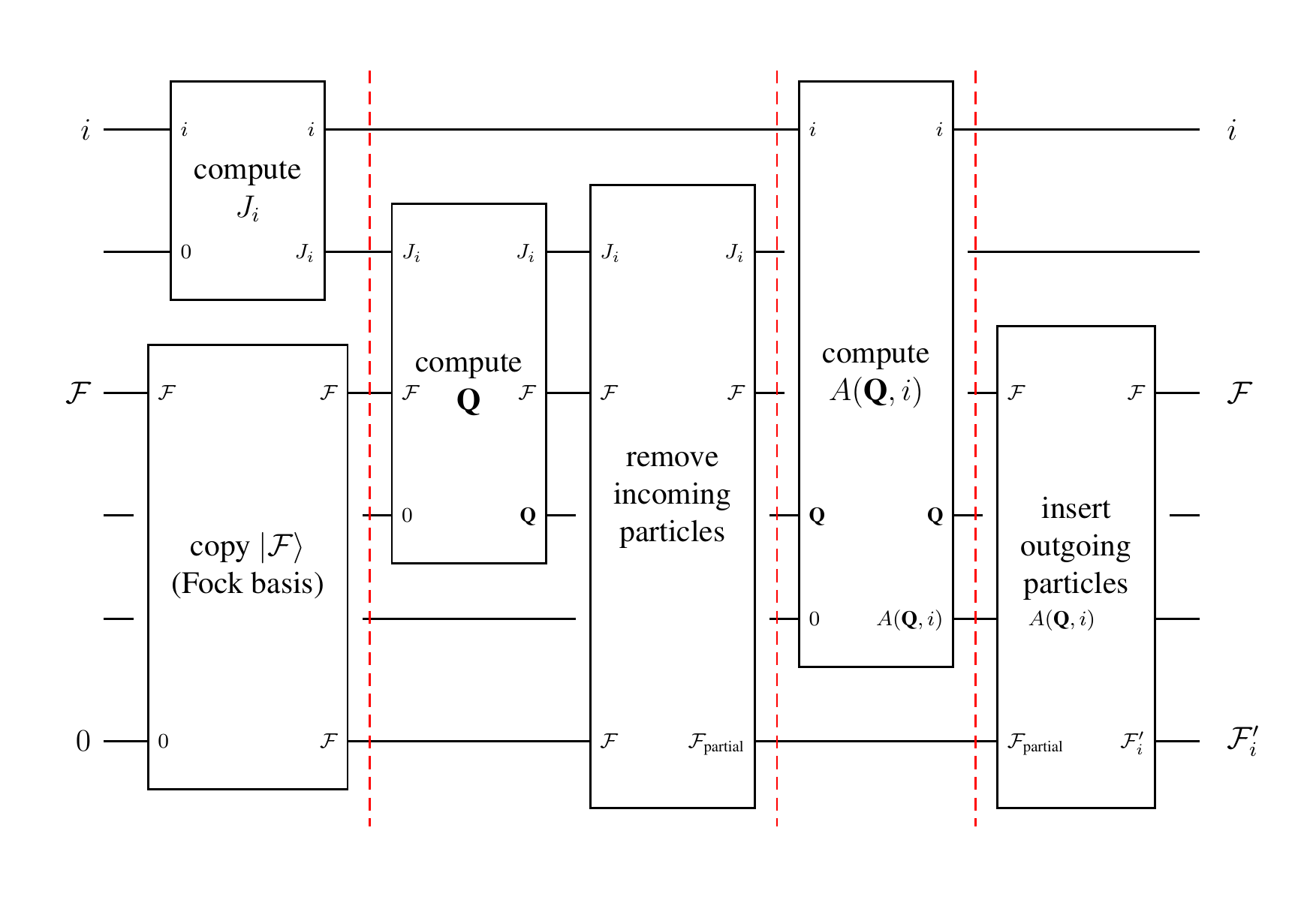}
\vspace{-0.5in}
\caption{Schematic of the circuit to implement the enumerator oracle. Labeled wires are input and output registers, and unlabeled wires are ancilla registers (each initially in the all $0$s state). The inputs and outputs are $|\mathcal{F}\rangle$ (the incoming Fock state), $|i\rangle$ (the sparsity index), and $|\mathcal{F}'_i\rangle$ (the outgoing Fock state), all as in \eqref{enumerator_oracle_def}. The intermediate quantities computed are $J_i$ (the list of modes in $|\mathcal{F}\rangle$ from which the incoming particles are taken, as in \eqref{J_i_values}), $\textbf{Q}$ (the total incoming and outgoing momentum), and $A(\textbf{Q},i)$ (the list of momenta of the outgoing particles, as in \eqref{A_form}). The circuit hides some additional ancillas, does not show uncomputation of the ancillas, and shows the action when $i$ is a valid index for an outgoing state. At the points marked by red dashed lines we have to check whether this is true, as described in detail in the text.
}
\label{fig:circuit}
\end{figure*}

We assume that the input is in the form $|\mathcal{F}\rangle|i\rangle$, where $|\mathcal{F}\rangle$ is some compact-encoded Fock state, and $i\in[k]$ where $k$ is the sparsity.
The $f$-point interaction is specified as in \cref{interaction_def}, i.e., as a set of $g$ outgoing lines identified by $(q_1,q_2,...,q_g)$, and a set of $f-g$ incoming lines identified by $(q_{g+1},q_{g+2},...,q_f)$.
The momentum cutoffs are as in \eqref{cutoff_def}: for each momentum $\textbf{n}$, each component $n_j$ must satisfy
\begin{equation}
    \Lcutoff_j\le n_j\le\Ucutoff_j,
\end{equation}
where $j$ runs over the dimensions.
All of the main ideas for the implementation of $O_F$ were introduced by the examples in \cref{enumerator_examples}, mostly in \cref{three_point_interaction_bosons_example}, so we simply indicate how to appropriately generalize these ideas in order to describe the method for arbitrary interactions.

\textbf{\emph{Step 1 (identify incoming modes).}}
This step is the same as in \cref{three_point_interaction_bosons_example,four_point_interaction_bosons_example}, except that the possible sets of incoming modes now have size $f-g$:
\begin{equation}
\label{J_i_values}
    J_i\in\{1,2,...,I\}^{f-g}.
\end{equation}
We therefore have to check that for $J_i=(j_1,j_2,...,j_{f-g})$, the $j_k$th mode in $|\mathcal{F}\rangle$ matches the identifying information $q_{g+k}$ of the $k$th incoming line in the interaction.
This generalizes the step in \cref{bosons_fermions_four_point_example} where we check that the $j_1$th mode is bosonic and the $j_2$th mode is fermionic.

\textbf{\emph{Step 2 (remove incoming particles and reorder modes).}}
This step is implemented exactly as in the examples, just with more repetitions of the removal procedure as we decrement the occupation in modes $j_1$, $j_2$, through $j_{f-g}$.
We store the momenta $\textbf{n}_{g+1},\textbf{n}_{g+2},...,\textbf{n}_f$ of the incoming lines in ancilla registers, and compute their sum $\textbf{Q}$.
Reordering modes is also exactly as in the examples; the list $E$ of emptied modes must now contain $f-g$ entries.

\textbf{\emph{Step 3 (compute outgoing momenta).}}
The classical lookup table $A(\textbf{Q},i)$ maps values of $\textbf{Q}$ and $i$ to ordered sets of $g$ outgoing momenta:
\begin{equation}
\label{A_form}
    A(\textbf{Q},i)=(\textbf{n}_1,\textbf{n}_2,...,\textbf{n}_g)
\end{equation}
such that
\begin{equation}
\label{momentum_conservation_general}
    \sum_{k=1}^g\textbf{n}_k=\textbf{Q}.
\end{equation}
As in \cref{four_point_interaction_bosons_example}, we classically iterate over the possible values $\textbf{Q}'$ and $i'$ of $\textbf{Q}$ and $i$, and implement a quantum operation that encodes $A(\textbf{Q}',i')$ in an ancilla register controlled on $(\textbf{Q}',i')=(\textbf{Q},i)$.

The only difference is that now $\textbf{Q}$ and $i$ can take more values.
$\textbf{Q}$ can be any momentum that is the sum of $f-g$ momenta consistent with the cutoffs $\Lcutoff_j,\Ucutoff_j$ in each dimension $j$.
For a given $\textbf{Q}$, outgoing momenta can be any set satisfying \eqref{momentum_conservation_general}, so $i$ must provide enough distinct values to distinguish these assignments for whichever value of $\textbf{Q}$ gives the most of them.
We will provide a detailed analysis of this later.

\textbf{\emph{Step 4 (insert outgoing particles).}}
This step is the same as in the examples in \cref{enumerator_examples}, except that we must now repeat the insertion procedure $g$ times, once for each of the $g$ outgoing modes.
The momenta of the outgoing modes are given by $\textbf{n}_1,\textbf{n}_2,...,\textbf{n}_g$, which we computed in step 3, and their identifying information (particle types and quantum numbers) is given by $q_1,q_2,...,q_g$ in the interaction specification.

\textbf{\emph{Step 5 (uncompute ancillas).}}
This step is the same as in the examples in \cref{enumerator_examples}.

This completes the implementation of $O_F$ for a general interaction.
A schematic for the circuit is shown in \cref{fig:circuit}.

\subsection{Analysis}

We will analyze the above algorithm in terms of the number of log-local operations required.
The specific log-local operations of interest are actions on constant numbers of mode registers $X_j$, either in encoded states or in ancilla registers.
These are log-local because each mode register contains logarithmically-many qubits in the momentum and occupation number cutoffs (see \eqref{moden}). The log-local operations we used are all controlled arithmetic operations. The problem of compiling such operations into primitive gates can be addressed independently, and is well-studied (see for example~\cite{JavadiAbhari2014scaffcc}). The choice of primitive gate set to compile into is also hardware-specific. Hence, we express our gate counts in terms of the log-local operations.

We analyze each of the steps outlined in the previous section.
Step 1 requires controlling on the possible values of $J_i$, leading to a number of log-local operations that scales with the number of possible values of $J_i$.
By \eqref{J_i_values}, the number of possible values of $J_i$ is upper bounded by $I^{f-g}$ (recall that $I$ is the maximum possible number of occupied modes), so
\begin{equation}
\label{step_1_cost}
    O\left(I^{f-g}\right)
\end{equation}
is an upper bound on the number of log-local operations required to implement step 1.
Recall that $f$ and $g$ are constant, so \eqref{step_1_cost} is polynomial in $I$.

Step 2 requires finding the modes whose indices match indices in $J_i$, decrementing their occupations, and reordering the modes: these are implemented via a constant number of simultaneous iterations over the $f-g$ entries in $J_i$, and over the $I$ modes in the copy of $|\mathcal{F}\rangle$.
Thus
\begin{equation}
    O\left(I(f-g)\right)
\end{equation}
is an upper bound on the number of log-local operations required to implement step 2.

Step 3 requires controlling on the pairs of possible values of $\textbf{Q}$ and $i$ that give distinct values of $A(\textbf{Q},i)$.
The number of such pairs is the same as the number of possible distinct values of $A(\textbf{Q},i)$.
These values have the form \eqref{A_form}, so the number of possible values is upper bounded by the number of possible values for each entry, raised to power $g$.
Each entry is the momentum of a single particle, so if we take $\Lambda_\text{max}$ to be the maximum momentum cutoff (in magnitude) over all dimensions, the number of possible values for each entry in $A(\textbf{Q},i)$ is $O(\Lambda_\text{max}^d)$ (recall that $d$ is the spatial dimension).
Hence,
\begin{equation}
    O\left(\Lambda_\text{max}^{dg}\right)
\end{equation}
is an upper bound on the number of distinct values of $A(\textbf{Q},i)$, and thus also an upper bound on the number of log-local operations required to implement step 3.

Step 4 requires a constant number of simultaneous iterations over the $g$ outgoing modes (determined by the value of $A(\textbf{Q},i)$ and $q_1,q_2,...,q_g$ as specified by the interaction), and over the $I$ modes in the copy of $|\mathcal{F}\rangle$.
Thus
\begin{equation}
    O\left(Ig\right)
\end{equation}
is an upper bound on the number of log-local operations required to implement step 4.

Step 5, uncomputing the ancillas, at worst doubles the cost of the full algorithm, so we may ignore it in the scaling.
The costs of steps 2 and 4 are subsumed by the costs of steps 1 and 3, so the total number of log-local operations required to implement the enumerator oracle and compute the inputs to the matrix element function is
\begin{equation}
\label{cost_scaling}
    O\left(I^{f-g}+\Lambda_\text{max}^{dg}\right).
\end{equation}
Hence, the number of log-local operations required to implement the enumerator oracle and compute the inputs to the matrix element function is polynomial in the momentum cutoff $\Lambda_\text{max}$, the number $I$ of mode registers, and the number of qubits (since this is linear in $I$ and logarithmic in the other parameters~--- see \eqref{qubitn}).

\section{Matrix element oracle}
\label{matrix_element}

The oracle $O_H$ defined in \eqref{matrix_element_oracle_def} calculates a matrix element of the interaction Hamiltonian $H_I$ (given by \eqref{interaction_hamiltonian}) to some desired precision.
The quantum input is a pair of compact-encoded Fock states $|\mathcal{F}\rangle$ and $|\mathcal{F}'\rangle$, taken to be the incoming and outgoing states in the interaction, respectively.
As described in the proof of \cref{construction_lemma}, above, $O_H$ is only implemented when $|\mathcal{F}'\rangle=|\mathcal{F}'_i\rangle$ for some $i$ (recall that $|\mathcal{F}'_i\rangle$ is the $i$th connected state to $|\mathcal{F}\rangle$), so we may assume that the matrix element of $|\mathcal{F}\rangle$ and $|\mathcal{F}'\rangle$ is nonzero (or that it is zero and that fact is recorded by the register $a(\mathcal{F},i)$ being nonzero~--- see \eqref{afuncdef}).

The value of the matrix element is given by its coefficient $\beta(\{\textbf{n}_i\})$ as in \eqref{interaction_hamiltonian} multiplied by any factors coming from the ladder operators.
As usual, applying a creation operator to a mode containing $w$ particles contributes a factor of $\sqrt{w+1}$, while applying an annihilation operator contributes a factor of $\sqrt{w}$.
In order to enforce antisymmetrization of fermions and antifermions, each (anti)fermionic ladder operator also contributes a factor of $\pm1$ determined by the parity of the number of particles of the same type encoded in mode registers preceding the mode register acted upon by the ladder operator (in the canonical ordering established in \cref{encoding}).

Consider a general interaction, with incoming lines $q_{g+1},q_{g+2},...,q_f$ and outgoing lines $q_1,q_2,...,q_g$.
This interaction connects Fock states $|\mathcal{F}\rangle$ and $|\mathcal{F}'\rangle$ when there is some assignment of momenta $\{\textbf{n}_i\}$ to the incoming and outgoing lines that conserves momentum, i.e., $\sum_{i=1}^g\textbf{n}_i=\sum_{i=g+1}^f\textbf{n}_i$, such that
\begin{equation}
    \langle\mathcal{F}'|\left(\prod_{i=1}^ga^\dagger_{q_i}(\textbf{n}_i)\right)\left(\prod_{i=g+1}^fa_{q_i}(\textbf{n}_i)\right)|\mathcal{F}\rangle\neq0.
\end{equation}
This is the case if and only if $\{(q_i,\textbf{n}_i)~|~i=1,2,...,g\}$ are the extra particles in $|\mathcal{F}'\rangle$ (and not in $|\mathcal{F}\rangle$), and $\{(q_i,\textbf{n}_i)~|~i=g+1,g+2,...,f\}$ are the extra particles in $|\mathcal{F}\rangle$ (and not in $|\mathcal{F}'\rangle$).
When this condition holds,
\begin{equation}
\begin{split}
    &\langle\mathcal{F}'|\left(\prod_{i=1}^ga^\dagger_{q_i}(\textbf{n}_i)\right)\left(\prod_{i=g+1}^fa_{q_i}(\textbf{n}_i)\right)|\mathcal{F}\rangle\\
    &=\pm\sqrt{\left(\prod_{i=1}^gw'_i\right)\left(\prod_{i=g+1}^fw_i\right)},
\label{ladder_ops_value}
\end{split}
\end{equation}
where the $\pm$ is set by fermion/antifermion antisymmetrization, each $w_i$ (for $i=g+1,g+2,...,f$) is the occupation of the mode $(q_i,\textbf{n}_i)$ in
\begin{equation}
    \left(\prod_{j=i+1}^fa_{q_j}(\textbf{n}_j)\right)|\mathcal{F}\rangle,
\end{equation}
and each $w'_i$ (for $i=1,2,...,g$) is the occupation of the mode $(q_i,\textbf{n}_i)$ in
\begin{equation}
    \left(\prod_{j=i}^ga^\dagger_{q_j}(\textbf{n}_j)\right)\left(\prod_{j=g+1}^fa_{q_j}(\textbf{n}_j)\right)|\mathcal{F}\rangle.
\end{equation}
In other words, if multiple creation or annihilation operators act on the same mode, for each the corresponding $w_i$ or $w'_i$ should be the occupation of the mode immediately before the annihilation or after the creation.

\begin{example}
Consider $a_0(2)^\dagger a_0(1)^2$, i.e., annihilation of two identical bosons with momentum one followed by creation of a boson with momentum two.
If the input state is
\begin{equation}
    |\mathcal{F}\rangle=|(0,1,5)\rangle,
\end{equation}
i.e., five bosons of momentum one and nothing else, then the output state is
\begin{equation}
    |\mathcal{F}'\rangle=|(0,1,3),(0,2,1)\rangle,
\end{equation}
i.e., three bosons of momentum one and one boson of momentum two.
Hence $w'_1=1$ (since the created momentum-two boson is in its own mode), $w_2=4$, and $w_3=5$ (since the momentum-one mode has occupation 5 when the first boson is annihilated and occupation 4 when the second boson is annihilated).
Thus for this example, \eqref{ladder_ops_value} becomes
\begin{equation}
\label{ladder_ops_value_ex}
    \langle\mathcal{F}'|a(2)^\dagger a(1)^2|\mathcal{F}\rangle=\sqrt{w'_1w_2w_3}=\sqrt{20}.
\end{equation}
\end{example}
Similarly, the value of the parity factor $\pm1$ in \eqref{ladder_ops_value} is the product of the parity factors due to the ladder operators at the times when they are applied.
The coefficient $\beta$ is a function of the $\textbf{n}_i$, so the complete value of the matrix element is 
\begin{equation}
\label{matrix_element_expression}
    \pm\beta(\{\textbf{n}_i\})\sqrt{\left(\prod_{i=1}^gw'_i\right)\left(\prod_{i=g+1}^fw_i\right)}.
\end{equation}

Recall that this is all assuming that $|\mathcal{F}'\rangle$ is connected to $|\mathcal{F}\rangle$ by the interaction, and that $\{\textbf{n}_i\}$ is the corresponding assignment of momenta to the external lines in the interaction.
But as pointed out in the first paragraph of this section, we may assume that we only have to evaluate the matrix element for pairs of states that are the output of the enumerator oracle, and hence are connected.
Therefore, computing the matrix element of $|\mathcal{F}\rangle,|\mathcal{F}'\rangle$ requires two steps:
\begin{enumerate}
    \item Find the momenta of the extra particles in each state, the occupations of the corresponding modes (accounting for the case when multiple bosons in the same mode are created or annihilated), and the parities of the preceding modes for particles of the same type (for fermions and antifermions).
    \item Evaluate \eqref{matrix_element_expression}.
\end{enumerate}

When we apply the enumerator oracle to determine $|\mathcal{F}'_i\rangle$ given $|\mathcal{F}\rangle$ and $i\rangle$, we can obtain the first step above along the way.
In particular, the set of indices $J_i$ \eqref{J_i_values} identifies the set of extra particles in $|\mathcal{F}\rangle$, and $A(\textbf{Q},i)$ is the set of momenta of the extra particles in $|\mathcal{F}'_i\rangle$.
The occupations of the corresponding modes in $|\mathcal{F}'_i\rangle$ are identified when the new particles are inserted to construct $|\mathcal{F}'_i\rangle$.
The parities for fermions and antifermions can be obtained by simply counting the numbers of preceding modes with the same particle type (the particle types are defined by the interaction), since the positions of the modes that the ladder operators act on are specified explicitly by $J_i$ (for the incoming particles), and in the course of inserting the outgoing particles.
Therefore, by the time we have obtained $|\mathcal{F}'_i\rangle$ in the course of implementing $O_F$, we can also complete step 1 of implementing $O_H$ above.
Thus we can execute $O_H$ as many times as desired by implementing step 2 above, as long as we do so prior to uncomputing the ancillas used to compute $O_F$.

In order to implement the quantum walk operator $T$ (see \eqref{newTdef}), we require two applications of $O_H$, one to compute the matrix element, and another to uncompute the matrix element after performing a rotation controlled on it (see step 3 in the proof of \cref{construction_lemma}, above).
There is no problem in putting off uncomputing the ancillas used in the computation of $O_F$ until after the controlled rotation has been executed.
Thus, we can perform both applications of $O_H$ simply by executing step 2 above, with the inputs given by these ancillas.
In other words, we can include all necessary applications of $O_H$ in our implementation of $O_F$, without needing to recompute the extra particles in each state $|\mathcal{F}\rangle,|\mathcal{F}'\rangle$.

The implementation of step 2 above, i.e., the actual evaluation of the matrix element as in \eqref{matrix_element_expression}, depends on the specific functional form of $\beta(\{\textbf{n}_i\})$.
However, we can make some general statements.
The matrix element expression \eqref{matrix_element_expression} is a function of $2f$ variables, $\{\textbf{n}_i\}$ and $w'_i,w_i$.
Each of the $\textbf{n}_i$ is a $d$-dimensional vector whose entries are constrained by the cutoffs \eqref{cutoff_def}, so if $\Lambda_\text{max}$ is the maximum magnitude of any cutoff, $\textbf{n}_i$ takes $O(\Lambda_\text{max}^d)$ values and is encoded in $O(d\log\Lambda_\text{max})$ qubits for each $i$.
Each of the $w_i$ and $w'_i$ is a positive integer upper bounded by $W$, where $W$ is the occupation number cutoff, so each can be encoded in $O(\log W)$ qubits.
Elementary arithmetic operations can be implemented as sequences of NOT, CNOT, and Toffoli gates with depth polynomial in the number of qubits of the inputs~\cite{vedral96a,JavadiAbhari2014scaffcc}.

Thus, assuming that the matrix element can be expressed as a fixed combination of elementary arithmetic operations, evaluating it requires
\begin{equation}
\label{matrix_element_cost_scaling}
    O\big(d\,f\,\text{polylog}(\Lambda_\text{max})+f\,\text{polylog}(W)\big)
\end{equation}
NOT, CNOT, and Toffoli gates.
In other words, for fixed interactions in fixed dimension, the entire $O_H$ can be executed by using the ancilla values computed during implementation of $O_F$, with gate count overhead that is polylogarithmic in the momentum and occupation number cutoffs.

\section{Analysis and Applications}
\label{applications}

In this section, we explain how to simulate several example models in 1+1D using the tools we have described above.
For each model, we also provide a comparison of the number of gates required for equal-time versus light-front formulations of relativistic quantum field theory.

The gates we are counting are log-local operations, i.e., operations on constant numbers of registers encoding single modes.
As noted above, we do not compile these operations all the way into primitive gates, because optimizing such compilation is an independent problem and is itself the subject of extensive study (see for example~\cite{JavadiAbhari2014scaffcc}), as well as being hardware-specific.
The gate counts we provide are also only for implementing the enumerator oracle and obtaining the inputs to the coefficient function, since as explained in \cref{matrix_element}, once these steps are complete computing the value of the matrix element requires a number of additional gates that is polylogarithmic in the momentum and occupation cutoffs (see \eqref{matrix_element_cost_scaling}).

\subsection{Free boson and fermion theory}\label{freeoracle}

We begin by examining free theories for bosons or fermions before moving on to interacting theories. The Hamiltonians we consider are linear combinations of number operators, and thus diagonal and $1$-sparse. More general free fermion and boson Hamiltonians can be cast into this diagonal form. An oracle call only entails computing the diagonal matrix element given the initial state $|\mathcal{F}\rangle$. Clearly, applying sophisticated quantum simulation methods to free theories is overkill. However, we discuss these theories because they are the simplest examples, and because these terms occur in interacting theories where nontrivial methods are necessary.

The enumerator oracle for a diagonal Hamiltonian simply copies any input Fock state to the output register:
\begin{equation}
    O_F:|\mathcal{F}\rangle|0\rangle\rightarrow|\mathcal{F}\rangle|\mathcal{F}\rangle.
\end{equation}
Thus it can be implemented by a single layer of CNOTs, one to copy the state of each qubit (in the computational basis).

In light-front quantization, the Hamiltonian for a free boson of mass $m_B$ in 1+1D is 
\be
    H = m_B^2\sum_{n=1}^K\frac{1}{n}a_n^\dag a_n,
\ee
where the different values of $n$ are light-front momenta, and $K$ is the total light-front momentum (harmonic resolution; see \cref{lf_application}).
The coefficient function for the Hamiltonian is therefore
\be\label{beta_free_boson}
\beta_{\{(0),(0)\}}(n) = \frac{m_B^2}{n},
\ee 
where `0' denotes boson.
Thus in this case the operations required to compute the matrix element are just those to compute a reciprocal.

Recall that our interactions as in \eqref{interaction_eq} (in \cref{interaction_def}) are specified as a pair of lists $\{(q_1,...,q_g),(q_{g+1},...,q_f)\}$, where $(q_1,...,q_g)$ are the outgoing particles and $(q_{g+1},...,q_f)$ are the incoming particles: thus in the present example $\{(0),(0)\}$ means one incoming boson and one outgoing boson.
The Hamiltonian can be rewritten in terms of \eqref{beta_free_boson} as
\be
H = \sum_{n=1}^K \beta_{\{(0),(0)\}}(n) a_n^\dag a_n.
\ee
The matrix element oracle for the free boson field Hamiltonian is
\begin{equation}
\begin{split}
&O_H:|\mathcal{F}\rangle|\mathcal{F}\rangle\rightarrow |\mathcal{F}\rangle |\mathcal{F}\rangle\left|\sum_{n=1}^K\beta_{\{(0),(0)\}}(n)w_n\right\rangle,\\
&O_H:|\mathcal{F}\rangle|\mathcal{F}'\rangle\rightarrow |\mathcal{F}\rangle |\mathcal{F}'\rangle|0\rangle,
\end{split}
\end{equation}
where $w_n$ is the occupation of the mode with light-front momentum $n$ in $|\mathcal{F}\rangle$, and $|\mathcal{F}'\rangle\neq|\mathcal{F}\rangle$.

The Hamiltonian for the Dirac field in 1+1D light-front quantization is 
\be H = m_F^2\sum_{n=1}^K \frac{1}{n}(b_n^\dag b_n+d_n^\dag d_n), \ee
where $m_F$ is the fermion/antifermion mass.
The coefficient function for each interaction  is 
\be  \beta_{\{(1),(1)\}}(n)=\beta_{\{(2),(2)\}}(n) = \frac{m_F^2}{n},  \ee
where `1'  denotes fermion and `2' denotes antifermion. Rewriting the Hamiltonian in terms of these gives
\be H= \sum_{n=1}^K 
\left(\beta_{\{(1),(1)\}}(n)b_n^\dag b_n+\beta_{\{(2),(2)\}}(n)d_n^\dag d_n\right). \ee 
The matrix element oracles for the two interactions in the Dirac field Hamiltonian are thus identical to the matrix element oracle for the free boson field, replacing the coefficient functions and occupation numbers with those corresponding to fermions and antifermions for the first and second interactions, respectively.

In equal-time quantization, the free Hamiltonian in second-quantized form looks similar to that in light-front quantization.
The only difference is that the sum runs over positive and negative momenta, and the coefficient function is given by
\begin{equation}
\label{final_free_coeff}
    \beta_{\{(i),(i)\}}(n)= \frac{1}{\sqrt{m^2+n^2}}=\frac{1}{\omega_n},
\end{equation}
where $i=0$, $1$, or $2$ for the boson, fermion, or antifermion interactions, and $m$ is the mass of the particle (scaled by the box size $L$).
Thus in this case the operations required to compute the matrix element are a square, a sum, a square-root, and a reciprocal.

\subsection{$\lambda \phi^4$ theory}

In light-front quantization, the $\lambda\phi^4$ theory in 1+1D has the Hamiltonian~\cite{PhysRevD.36.1141}
\begin{equation} 
\label{phi4ham}
    H=H_0+H_I,
\end{equation}
where
\begin{equation}
    H_0=\sum_{n}\frac{1}{n}a_n^\dag a_n\left(m^2+\frac{\lambda}{4\pi}\frac{1}{2}\sum_k\frac{1}{k}\right)
\end{equation}
and
\begin{equation}
\label{phi4_interacting_part}
\begin{split}
    H_I=&\frac{1}{4}\frac{\lambda}{4\pi}\sum_{klmn}\frac{a_k^\dag a_l^\dag a_m a_n}{\sqrt{klmn}}\delta_{m+n,k+l}\\
    +&\frac{1}{6}\frac{\lambda}{4\pi}\sum_{klmn}\left(\frac{a_k^\dag a_la_ma_n+a_k^\dag a_l^\dag a_m^\dag a_n}{\sqrt{klmn}}\right)\delta_{k,m+n+l},
\end{split}
\end{equation}
where $\lambda$ is the coupling constant, and $H_0,H_I$ are the free and interacting parts of the Hamiltonian, respectively.
The sums are over light-front momenta in the range $[1,K]$. We can treat the free part of the Hamiltonian by the methods of~\cref{freeoracle}. In this section we focus on the interacting part of the Hamiltonian, given in \eqref{phi4_interacting_part}.

$H_I$ is composed of three interactions, corresponding to the ladder operator monomials $a_k^\dagger a_la_ma_n$, $a_k^\dagger a_l^\dagger a_ma_n$, and $a_l^\dagger a_m^\dagger a_n^\dagger a_k$ (summed over the momenta).
In our interaction notation as in \cref{interaction_def}, these are written $\{(0),(0,0,0)\}$, $\{(0,0),(0,0)\}$, and $\{(0,0,0),(0)\}$, respectively.
Reading off from \eqref{phi4_interacting_part}, the coefficient functions are given by
\begin{equation}
\label{phi4_lf_coeff_1}
    \beta_{\{(0,0),(0,0)\}}(k,l,m,n)=\frac{\lambda}{16\pi\sqrt{klmn}},
\end{equation}

\begin{equation}
\label{phi4_lf_coeff_2}
    \beta_{\{(0),(0,0,0)\}}(k,l,m,n)=\frac{\lambda}{24\pi\sqrt{klmn}},
\end{equation}
and
\begin{equation}
\label{phi4_lf_coeff_3}
    \beta_{\{(0,0,0),(0)\}}(k,l,m,n)=\frac{\lambda}{24\pi\sqrt{klmn}}.
\end{equation}
Note that the delta functions that enforce momentum conservation are not included in the coefficient functions, because momentum conservation is enforced at an earlier step in the algorithm than computation of matrix elements.
These are the entirety of the inputs needed to specify our oracle implementations.
Rewriting the Hamiltonian in terms of the coefficient functions gives
\begin{equation}
\begin{alignedat}{8}
    H_I = &\sum_{klmn}  \beta_{\{(0,0),(0,0)\}}(k,l,m,n) &&a_k^\dag a_l^\dag a_m a_n \\
    +&\sum_{klmn} \beta_{\{(0,0,0),(0)\}}(k,l,m,n) &&a_k^\dag a_la_ma_n \\
    +&\sum_{klmn}\beta_{\{(0),(0,0,0)\}}(k,l,m,n)&&a_k^\dag a_l^\dag a_m^\dag a_n,
\end{alignedat}
\end{equation}
where the sums run over momentum-conserving combinations of $k,l,m,n\in\{1,2,...,K\}$ for total light-front momentum $K$.

In equal-time quantization, the interacting part of the $\lambda \phi^4$ Hamiltonian in 1+1D is
\begin{multline}
\label{phi4_ET}
H_I=\frac{\lambda}{4!}\sum_{k,l,p,f}\frac{1}{\sqrt{16\omega_p\omega_l\omega_k\omega_f}}[\\
a_pa_ka_la_f\delta_{-f-l,k+p}+a_p^\dag a_k^\dag a_l^\dag a_f^\dag \delta_{l+f,-k-p}\\
+4a_f^\dag a_pa_ka_l\delta_{f,k+l+p}+6a_ka_l\delta_{f,p}\delta_{k,-l}\\
+6a_l^\dag a_f^\dag a_pa_k\delta_{l+f,k+p}\\
+6a_k^\dag a_l^\dag\delta_{f,p}\delta_{k,-l}+4a_k^\dag a_l^\dag a_f^\dag a_p\delta_{l+f+k,p}].
\end{multline}
From this, we can read off the coefficient functions:
\begin{equation}
\label{phi4_et_coeff}
\begin{alignedat}{8}
&\beta_{\{(0,0,0,0),()\}}(k,l,f,p)= \frac{1}{\sqrt{16\omega_p\omega_l\omega_k\omega_f}},\\
&\beta_{\{(),(0,0,0,0)\}}(k,l,f,p) = \frac{1}{\sqrt{16\omega_p\omega_l\omega_k\omega_f}},\\
&\beta_{\{(0,0,0),(0)\}}(k,l,f,p)= \frac{4}{\sqrt{16\omega_p\omega_l\omega_k\omega_f}},\\
&\beta_{\{(0,0),()\}}(k,l)=\sum_{f,p}\frac{6}{\sqrt{16\omega_p\omega_l\omega_k\omega_f}}\delta_{f,p},\\
&\beta_{\{(0,0),(0,0)\}}(k,l,f,p)= \frac{6}{\sqrt{16\omega_p\omega_l\omega_k\omega_f}},\\
&\beta_{\{(),(0,0)\}}(k,l)= \sum_{f,p}\frac{6}{\sqrt{16\omega_p\omega_l\omega_k\omega_f}}\delta_{f,p},\\
&\beta_{\{(0),(0,0,0)\}}(k,l,f,p )=\frac{4}{\sqrt{16\omega_p\omega_l\omega_k\omega_f}}.
\end{alignedat}
\end{equation}
Note that the coefficient functions for the interactions having only two external lines ($\beta_{\{(),(0,0)\}}$ and $\beta_{\{(0,0),()\}}$) are only functions of the momenta of those lines ($k$ and $l$). 
For the sake of brevity, we will omit rewriting the Hamiltonian explicitly in terms of the coefficient functions going forward, as we hope this correspondence has been made clear from the examples above.
As before, the coefficient functions together with their associated interactions are the inputs required to define the Hamiltonian oracles.

The log-local gate counts for an explicit implementation of the oracles for $\lambda\phi^4$ in both light-front and equal-time quantization are given in \cref{fig:phi4_gate_count}.
The gate counts are given as a function of $K$.
The equal-time counts are obtained by choosing single-particle momentum cutoffs $[-\Lambda,\Lambda]$ for $\Lambda=\lceil K/2\rceil-1$, so that the total number of lattice points in the equal-time simulation would be
\begin{equation}
\label{lambda_in_terms_of_K}
    2\Lambda+1=2\lceil K/2\rceil-1=
    \begin{cases}
        K-1\quad\text{for even $K$,}\\
        K\quad\text{for odd $K$,}
    \end{cases}
\end{equation}
for a fair comparison to $K$ lattice points in the light-front simulation.
For the equal-time counts, we also have to impose a cutoff on the number $I$ of distinct occupied modes (which is \emph{a priori} arbitrary), which we choose to be $K$, again to provide a conservative comparison to the light-front simulation (in which the number of distinct occupied modes is in fact smaller still at $O(\sqrt{K})$~\cite{kreshchuk20a}).

The number of log-local operations to implement the enumerator oracle is given by \eqref{cost_scaling}: $O\left(I^{f-g}+\Lambda_\text{max}^{dg}\right)$.
Since the matrix-element oracle requires only logarithmically-many additional gates, implementing it does not change the asymptotic scaling, as noted above.
Since $I=K$ for equal-time, this number of log-local operations becomes $O(K^4)$ due to the interactions that have four incoming particles ($f-g=4$) in \eqref{phi4_ET}.
For light-front, $I=\sqrt{K}$ so the term $I^{f-g}$ no longer dominates, but the second term gives $O(K^3)$ due to the interactions that have three outgoing particles ($g=3$) in \eqref{phi4_interacting_part} (since $\Lambda_\text{max}=K$ in this case).

These costs are indeed what we see in the log-log plot \cref{fig:phi4_gate_count}, which shows the exact log-local gate counts for light-front and equal-time.
To illustrate the asymptotic behaviors, \cref{fig:phi4_gate_count} also plots $K^3$ and $K^4/5$.
Since we expect the light-front cost to be $O(K^3)$ and the equal-time cost to be $O(K^4)$, as discussed above, the slopes of the data should approach those of the plotted lines on the log-log plot, which is what we see.

The extra operations required to compute the matrix elements for light-front quantization are those required to evaluate \eqref{phi4_lf_coeff_1}, \eqref{phi4_lf_coeff_2}, and \eqref{phi4_lf_coeff_3}, namely products, square-roots, and reciprocals.
The extra operations required to compute the matrix elements for light-front quantization are those required to evaluate \eqref{phi4_et_coeff} for the frequencies $\omega_n$ defined as in \eqref{final_free_coeff}.
Hence they require squares, sums, products, square-roots, and reciprocals.

\begin{figure}
\centering
\includegraphics[width=\columnwidth]{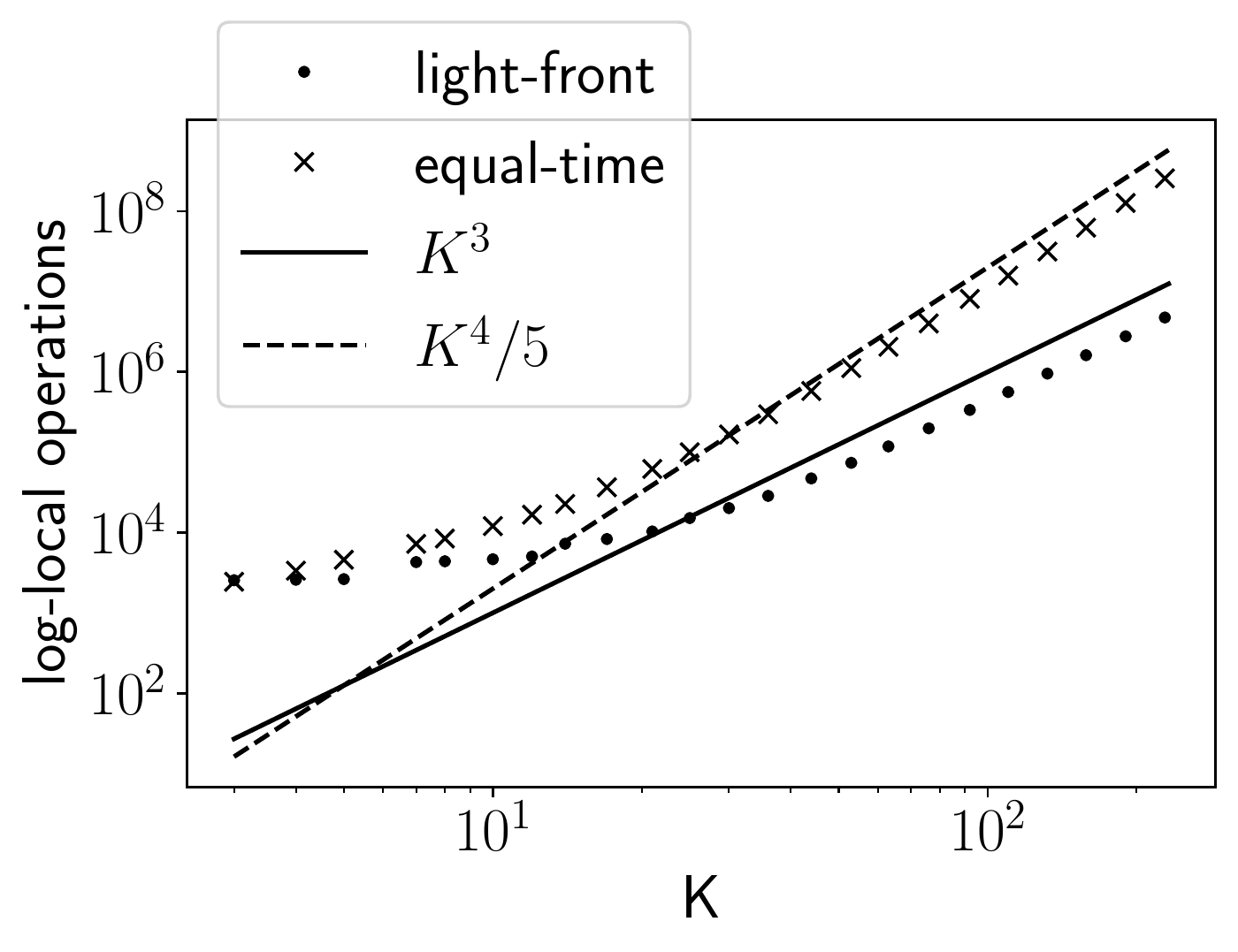}
\caption{Gate counts (in log-local operations) to implement oracles for $\lambda\phi^4$ theory in 1+1D. The equal-time cutoffs are $[-\Lambda,\Lambda]$, for $\Lambda$ defined in terms of $K$ by \eqref{lambda_in_terms_of_K}. The exact gate counts for light-front and equal-time quantization are given by the points and crosses, respectively. The solid line $K^3$ and dashed line $K^4/5$ are included to illustrate that the datapoints are indeed converging to their expected asymptotic scalings of $O(K^3)$ for light-front and $O(K^4)$ for equal-time.}
\label{fig:phi4_gate_count}
\end{figure}

\subsection{Massive Yukawa model}

\begin{figure}
\centering
\includegraphics[width=\columnwidth]{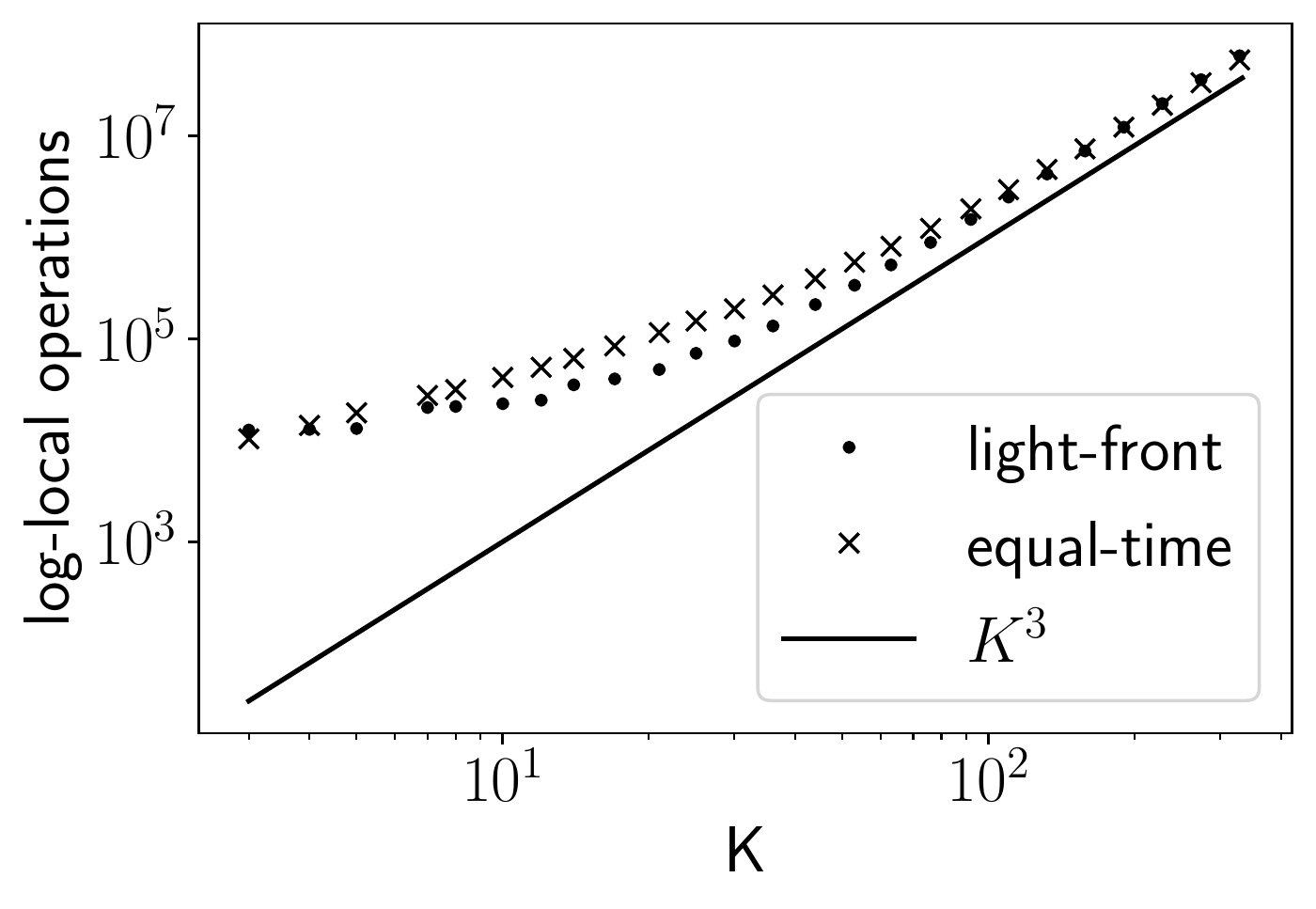}
\caption{Gate counts (in log-local operations) to implement oracles for the massive Yukawa model. The equal-time cutoffs are $[-\Lambda,\Lambda]$, for $\Lambda$ defined in terms of $K$ by \eqref{lambda_in_terms_of_K}. The exact gate counts for light-front and equal-time quantization are given by the points and crosses, respectively. The line $K^3$ is included to illustrate that the datapoints are indeed converging to their expected asymptotic scaling of $O(K^3)$.} 
\label{fig:yukawa_gate_count}
\end{figure}

For the massive Yukawa model in 1+1D light-front quantization, we only write the interaction Hamiltonian~\cite{PhysRevD.32.1993}.
This is composed of the so called \emph{vertex}, \emph{seagull}, and \emph{fork} terms:
\be
\label{lf_yukawa_ham}
H_I =H_V+H_S+H_F.
\ee
The first term, $H_V$, is
\begin{equation} 
\begin{split}
    H_V=g&m_F\sum_{k,l,m}\\
    \bigg[&\frac{2}{(k+l)\sqrt{l}}(b_k^\dag b_m a_l^\dag+b_m^\dag b_ka_l)\\
    &+\frac{2}{(k+l)\sqrt{l}}(d_k^\dag d_ma_l^\dag+d_m^\dag d_ka_l)\\
    &+\frac{2}{(k-m)\sqrt{m}}(b_kd_la_m^\dag+d_l^\dag b_k^\dag a_m)\bigg]\delta_{k+l,m},
\end{split}
\end{equation}
where the $a^{(\dagger)}$ are boson ladder operators, $b^{(\dag)}$ are fermion ladder operators, and $d^{(\dag)}$ are antifermion ladder operators. 
The resulting coefficient functions for $H_V$ are
\begin{equation}
\begin{split}
    & \beta_{\{(1),(1,0)\}}(k,l,m) = \beta_{\{(1,0),(1)\}}(k,l,m) \\
    & = \beta_{\{(2,2),(0)\}}(k,l,m) \\
    & = \beta_{\{(2,0),(2)\}}(k,l,m) = \frac{2gm_F}{(k+l)\sqrt{l}},
\end{split}
\end{equation}
\begin{equation}
\begin{split}
    & \beta_{\{(1,2),(0)\}}(k,l,m) = \beta_{\{(0),(1,2)\}}(k,l,m) \\
    & = \frac{2gm_F}{(k-m)\sqrt{m}},
\end{split}
\end{equation}
where in the subscripts, `0' denotes boson, `1' denotes fermion, and `2' denotes antifermion.

The second term in \eqref{lf_yukawa_ham}, $H_S$, is
\begin{equation}
\begin{split}
    H_S = g^2\sum_{k,l,m,n}\bigg[&\frac{1}{m-k}(d_kb_la_m^\dag a_n^\dag +b_l^\dag d_k^\dag a_n a_m)\\
    &+\frac{2}{k-n}b_k^\dag b_l a_m^\dag a_n\\
    &+\frac{2}{k-n}d_k^\dag d_la_m^\dag a_n\bigg]\frac{\delta_{k+l,m+n}}{\sqrt{mn}},
\end{split}
\end{equation}
resulting in the following coefficient functions:
\begin{equation}
\begin{split}
    & \beta_{\{(1,0),(1,0)\}}(k,l,m,n) \\
    & = \beta_{\{(2,1),(2,1)\}}(k,l,m,n)=\frac{2g^2}{(k-n)\sqrt{mn}},
\end{split}
\end{equation}
\begin{equation}
\begin{split}
    & \beta_{\{(2,1),(0,0)\}}(k,l,m,n) \\
    & = \beta_{\{(0,0),(2,1)\}}(k,l,m,n)=\frac{g^2}{(m-k)\sqrt{mn}}.
\end{split}
\end{equation}

The third and final term in \eqref{lf_yukawa_ham}, $H_F$, is
\begin{equation} 
\begin{split}
    H_F= g^2&\sum_{k,l,m,n}\\
    \bigg[&\frac{1}{(k+l)\sqrt{lm}}(b_k^\dag b_na_l^\dag a_m^\dag+b_n^\dag b_ka_ma_l)\\
    &+\frac{1}{(k+l)\sqrt{lm}}(d_k^\dag d_n a_l^\dag a_m^\dag+d_n^\dag d_k a_m a_l)\\
    &+\frac{2}{(k-n)\sqrt{ln}}b_k^\dag d_m^{\dag}a_l^\dag a_n\\
    &+\frac{2}{(k-n)\sqrt{ln}}d_mb_ka_n^\dag a_l\bigg]\delta_{k+l+m,n},
\end{split}
\end{equation}
resulting in the following coefficient functions:
\begin{equation}
\begin{split}
    & \beta_{\{(1)(0,0,1)\}}(k,l,m,n) \\
    & = \beta_{\{(0,0,1),(1)\}}(k,l,m,n) \\
    & = \beta_{\{(2)(0,0,2)\}}(k,l,m,n) \\
    & = \beta_{\{(0,0,2),(2)\}}(k,l,m,n) = \frac{g^2}{(k+l)\sqrt{lm}},
\end{split}
\end{equation}
\begin{equation}
\begin{split}
    &\beta_{\{(0),(0,1,2)\}}(k,l,m,n) \\
    & = \beta_{\{(0,1,2),(0)\}}(k,l,m,n) = \frac{2g^2}{(k-n)\sqrt{ln}}.
\end{split}
\end{equation}
 
In 1+1D equal-time quantization, the interacting part of the Yukawa Hamiltonian with free field expansion is:
\begin{multline}
    H_{I}=\sum_{l,k,p}\frac{1}{\sqrt{2\omega_k}} \frac{1}{\sqrt{2\omega_p}} \frac{1}{\sqrt{2\omega_l}}\sum_{\gamma,s}\bigg[\\
    c_l^{s\dag}c_k^\gamma a_p\bar{\mu}^s(l)\mu^\gamma(k)\delta_{l,k+p}+c_l^{s\dag}d_k^{\gamma\dag}a_p\bar{\mu}^s(l)\nu^\gamma(k)\delta_{l+k,p}\\
    +d_l^{s}c_k^\gamma a_p\bar{\nu}^s(l)\mu^\gamma(k)\delta_{l+k+p,0}-d_k^{\gamma\dag}d_l^{s}a_p\bar{\nu}^s(l)\nu^\gamma(k)\delta_{k,l+p}\\
    +c_l^{s\dag}c_k^\gamma a_p^\dag\bar{\mu}^s(l)\mu^\gamma(k)\delta_{k,l+p}+c_l^{s\dag}d_k^{\gamma\dag}a_p^\dag\bar{\mu}^s(l)\nu^\gamma(k)\delta_{l+k+p,0}\\
    +d_l^{s}c_k^\gamma a_p^\dag\bar{\nu}^s(l)\mu^\gamma(k)\delta_{l+k,p}-d_k^{\gamma\dag}d_l^{s}a_p^\dag \bar{\nu}^s(l)\nu^\gamma(k)\delta_{l,k+p}\bigg],
\end{multline}
where $\mu$ is the fermion spinor, $\nu$ is the antifermion spinor, and $s$ and $\gamma$ are the spin indices.
This leads to the following coefficient functions:
\begin{equation}
\begin{split}
    &\beta_{\{(0,1),(1\}}(k,l,p,s,\gamma) \\
    & = \beta_{\{(1),(0,1)\}}(k,l,p,s,\gamma) = \frac{1}{\sqrt{8\omega_k\omega_p\omega_l}}\bar{\mu}^s(l)\mu^\gamma(k),
\end{split}
\end{equation}
\begin{equation}
\begin{split}
    & \beta_{\{(0),(1,2)\}}(k,l,p,s,\gamma) \\
    & = \beta_{\{(),(0,1,2)\}}(k,l,p,s,\gamma) = \frac{1}{\sqrt{8\omega_k\omega_p\omega_l}}\bar{\mu}^s(l)\nu^\gamma(k),
\end{split}
\end{equation}
\begin{equation}
\begin{split}
    & \beta_{\{(1,2),(0)\}}(k,l,p,s,\gamma) \\
    & = \beta_{\{(0,1,2),()\}}(k,l,p,s,\gamma) = \frac{1}{\sqrt{8\omega_k\omega_p\omega_l}}\bar{\nu}^s(l)\mu^\gamma(k),
\end{split}
\end{equation}
\begin{equation}
\begin{split}
    & \beta_{\{(0,2),(2)\}}(k,l,p,s,\gamma) \\
    & = \beta_{\{(2),(0,2)\}}(k,l,p,s,\gamma) = \frac{1}{\sqrt{8\omega_k\omega_p\omega_l}}\bar{\nu}^s(l)\nu^\gamma(k).
\end{split}
\end{equation}
Note that we have expanded our set of arguments of the coefficient functions to include the spin indices $s,\gamma$.
If we instead wished to obtain interactions exactly as defined in \cref{interaction_def}, we could let each of the above coefficient functions split into four functions, one for each of the pairs of values for the spin indices, but this would just become unwieldy, and there is no harm in including the spin indices as arguments.

The log-local gate counts for implementing the oracles for the Yukawa interaction in both light-front and equal-time quantization are given in \cref{fig:yukawa_gate_count}, with the equal-time cutoff $\Lambda=\lceil K/2\rceil-1$ as for the $\phi^4$ theory, above.
Recall that \eqref{cost_scaling} gives the scaling of the number of log-local operations required to implement the oracles.
Since $\Lambda=O(K)$ as we just discussed, and $I<O(K)$ in both light-front and equal-time, from \eqref{cost_scaling} we see that in both light-front and equal-time the most costly interactions to simulate are those with three outgoing particles.
This gives a cost in log-local operations of $O(K^3)$, which is indeed what we see in \cref{fig:yukawa_gate_count}.

\section{Beyond the plane wave momentum basis}
\label{beyond_momentum}

We have demonstrated how, given a second-quantized Hamiltonian in the plane wave momentum representation of a field theory, we can implement the oracle unitaries necessary to apply sparsity-based simulation methods.
Our methods extend to any second-quantized Hamiltonian containing a fixed number of interactions, even if it is not expressed in the plane wave momentum basis.
All that is required is that for each interaction, it is possible to efficiently enumerate all possible sets of outgoing particles given a particular set of incoming particles, and to efficiently compute the matrix element given the incoming and outgoing particles.
In the plane wave momentum representation, we used momentum conservation for the former task: given a set of incoming particles, we can add up their momenta to obtain the total transferred momentum, and then enumerate all possible allocations of this momentum amongst the outgoing particles.

However, more generally the sets of outgoing particles can always be enumerated in polynomial time as long as the number of outgoing particles is fixed and there are only polynomially-many possible states for each particle.
Here polynomial means polynomial in whatever problem parameter governs the asymptotic scaling.
If $g$ is the number of outgoing particles and $P$ is an upper bound on the number of states that each outgoing particle may take, then $P^g$ is an upper bound on the number of distinct sets of outgoing particles from a particular set of incoming particles.
We could apply this argument to the plane wave momentum basis case, and it would lead to an efficient algorithm, but with worse scaling than the one we presented above, since it would overcount the possible outgoing states.
This illustrates that using momentum conservation at the level of enumerating outgoing states was really an additional constraint that we imposed in order to save resources, rather than an intrinsically necessary part of the algorithm.

Hence there is no problem with extending our algorithm to a non-momentum basis as long as it is possible to efficiently calculate the matrix element between two Fock states.
If there is no conserved quantity that constrains the outgoing particles, then we can enumerate all of the possible sets of outgoing particles as described above.
If there is a conserved quantity (or more than one), then just as for momentum conservation we can compute its value for the incoming particles and then only enumerate sets of outgoing particles that conserve it.
But to reiterate, either of these approaches is efficient; choosing whether or not to exploit a conserved quantity simply changes the details of the scaling.
Hence, although our main presentation focused on the plane wave momentum basis, we can apply our methods to a wide variety of theories expressed in other bases, in quantum chemistry, condensed matter physics, and quantum field theory, including basis light-front quantization~\cite{varybasis,kreshchuk2020light,kreshchuk2020blfq}.

\section{Conclusion}
\label{conclusion}

In this paper, we presented implementations of the Hamiltonian oracles for second-quantized Hamiltonians of theories including bosons and fermions.
We focused on the plane wave momentum basis, but the methods we described generalize to any second-quantized Hamiltonian as long as it only contains polynomially-many terms (monomials in the creation and annihilation operators), and as long as the coefficients of the terms can be computed efficiently.
These oracle implementations are the necessary inputs to any of the large collection of simulation techniques for sparse Hamiltonians~\cite{aharonov03a,childs03a,berry07a,childs10a,berry12a,berry14a,berry15a,berry15b,low17a,low19a,berry20a}.
The generality of our algorithms means that for some specific field theories, it is likely possible to develop algorithms that are tailored to the structure of the theories and outperform our methods (see \cref{prior_work}, for example).
However, our goal was to provide a general-purpose tool, and thus to establish that for second-quantized Hamiltonians satisfying only the modest constraints stated above, efficient quantum simulation by optimal sparsity-based methods is possible.

~
\begin{acknowledgements}
W.~M.~K. acknowledges support from the National Science Foundation, Grant No. DGE-1842474.
M.~K. acknowledges support from DOE HEP Grant No. DE-SC0019452.
This work was supported by the NSF STAQ project (PHY-1818914), and by the U.S. Department of Energy, Office of Science, National Quantum Information Science Research Centers, Quantum Systems Accelerator (QSA).
\end{acknowledgements}

\bibliography{references}

\appendix

\end{document}